\newtheorem{theorem}{Theorem}
\newtheorem{lemma}{Lemma}
\newtheorem{fact}{Fact}
\newcommand{\Rset}{\mathbb{R}}
\newcommand{\Efam}{\mathcal{E}}
\newcommand{\LP}{P}
\newcommand{\Dual}{D}
\newcommand{\EC}{EC}
\newcommand{\TreeDual}{D}
\newcommand{\lca}{{\rm lca}}
\newcommand{\Var}{{\tt Var}}
\tikzstyle{vertex}=[circle,draw, fill=white,inner sep=2.3pt]
\tikzstyle{terminal}=[circle,draw,fill, inner sep=2.3pt]
\tikzstyle{biset}=[fill=gray,fill opacity=.2, line width=8pt, draw=gray]
\tikzstyle{biset outer}=[dotted,line width=2pt]
\title{Covering problems in edge- and node-weighted graphs}
\author{Takuro Fukunaga\footnote{National Institute of Informatics,
2-1-2 Hitotsubashi, Chiyoda-ku, Tokyo, Japan.
JST, ERATO, Kawarabayashi Large Graph Project, Japan.
Email: takuro@nii.ac.jp}}
\date{}
\begin{document}
\maketitle

\begin{abstract}
 This paper discusses the graph covering problem in which a set of edges
 in an edge- and node-weighted graph 
is chosen to satisfy some covering constraints while minimizing the sum
 of the weights.
In this problem, because of the large integrality gap of a
natural linear programming (LP) relaxation, 
 LP rounding algorithms based on the
relaxation yield poor performance.
 Here we propose a stronger LP relaxation for the graph covering problem.
 The proposed relaxation is applied to designing primal-dual algorithms 
 for two fundamental graph covering problems: the prize-collecting edge
 dominating set problem and the multicut problem in trees.
 Our algorithms
 are an exact polynomial-time
 algorithm for the former problem, and a 2-approximation algorithm for
 the latter problem, respectively.
 These results match the currently known best
 results for purely edge-weighted graphs.
\end{abstract}

\section{Introduction}
\label{sec.intro}

\subsection{Motivation}
Choosing a set of edges in a graph that optimizes some objective function
under constraints on the chosen edges
constitutes a typical combinatorial optimization problem and has been investigated in many varieties.
For example, the spanning tree problem seeks an acyclic
edge set that spans all nodes in a graph, the edge cover problem
finds an edge set such that each node is incident to at
least one edge in the set, and the shortest path problem selects an edge set that connects two specified nodes.
All these problems seek to minimize the sum of the weights assigned to
edges.

This paper discusses several graph covering problems.
Formally, the \emph{graph covering problem} is defined as follows in
this paper.
Given a graph $G=(V,E)$
and family $\Efam \subseteq 2^E$,
find a subset $F$ of $E$ that satisfies
$F \cap C \neq \emptyset$ for each $C \in \Efam$, while optimizing
some function depending on $F$.
As indicated above, 
the popular approaches assume an edge weight function
$w \colon E \rightarrow \Rset_+$ is given, where $\Rset_+$ denotes the set of
non-negative real numbers, and
seeks to minimize $\sum_{e \in F}w(e)$.
On the other hand, we aspire to simultaneously minimize edge and node weights.
Formally, we let $V(F)$ denote the set of end nodes of edges in $F$.
Given a graph $G=(V,E)$ and weight function $w\colon E\cup V \rightarrow \Rset_+$,
we seek a subset $F$ of $E$ that 
minimizes $\sum_{e\in F}w(e) + \sum_{v \in V(F)}w(v)$ under the
constraints on $F$.
Hereafter, we denote $\sum_{e\in F}w(e)$ and $\sum_{v \in
V(F)}w(v)$ by $w(F)$ and $w(V(F))$, respectively.

Most previous investigations of the graph covering problem have focused
on edge weights. By contrast, node weights have been largely neglected, except in the
problems of choosing node sets, such as the vertex cover and dominating set problems.
To our knowledge, 
when node weights have been considered in graph covering problems
for choosing edge sets, they have been
restricted to the Steiner tree problem or its generalizations,
possibly because the inclusion of node weights greatly complicates the
problem. For example, the Steiner tree problem in edge-weighted
graphs can be
approximated within a constant factor (the best currently known
approximation factor is 1.39~\cite{Byrka2013,Goemans2012}). Conversely, 
the Steiner tree problem in node-weighted graphs
is known to extend the set cover problem (see \cite{KleinR95}), indicating that achieving an
approximation factor of $o(\log |V|)$ is NP-hard.
The literature is reviewed in Section~\ref{sec.related}. As revealed later, 
the inclusion of node weights generalizes the set cover problem in numerous fundamental problems.

However, from another perspective, node weights can introduce rich structure
into the above problems.
In fact, node weights provide useful optimization problems.
The objective function counts the weight of a node only once, even if the node is shared by multiple edges.
Hence, the objective function defined from node weights includes a certain
subadditivity, which cannot be captured by edge weights.

The aim of the present paper is to give algorithms for fundamental graph covering problems in
edge- and node-weighted graphs. In solving the problems, we adopt a basic linear programming (LP)
technique. Many algorithms for combinatorial
optimization problems are typically designed using LP
relaxations. However, in problems with node-weighted graphs, the integrality gap of natural relaxations may be excessively large.
Therefore, we propose tighter LP relaxations that preclude unnecessary integrality gaps.
We then discuss upper bounds on the integrality gap of these relaxations in 
two fundamental graph covering problems:
the edge dominating set (EDS) problem and multicut problem in trees. 
We prove upper bounds by designing primal-dual algorithms for both problems. The approximation factors of our proposed algorithms match the current best approximations in purely edge-weighted graphs.

\subsection{Problem definitions}

The EDS problem covers edges by
choosing adjacent edges in undirected graphs.
For any edge $e$, let $\delta(e)$ denote the set of edges that share
end nodes with $e$, including $e$ itself.
We say that an edge $e$ {\em dominates} another edge $f$ if $f \in
\delta(e)$,
and a set $F$ of edges dominates an edge $f$ if $F$
contains an edge that dominates $f$.
Given an undirected graph $G=(V,E)$, 
a set of edges is called an EDS
if it dominates each edge in $E$.
The EDS problem seeks to minimize the weight of the EDS.
In other words, the EDS problem is the graph covering
problem with $\Efam=\{\delta(e) \colon e \in E\}$.

In the multicut problem, an instance specifies an undirected graph $G=(V,E)$ and demand pairs
$(s_1,t_1),\ldots,(s_k,t_k) \in V\times V$.
A {\em multicut} is an edge set $C$ whose removal from $G$ disconnects
the nodes in each demand pair. This problem seeks a multicut of minimum weight. 
Let $\mathcal{P}_i$ denote the set of paths connecting $s_i$ and $t_i$.
The multicut problem is equivalent to the graph covering problem with
$\Efam=\bigcup_{i=1}^k \mathcal{P}_i$.

Our proposed algorithms for solving these problems assume
that the given graph $G$ is a tree.
In fact, our algorithms are applicable to the 
prize-collecting versions of these problems, which additionally specifies
a penalty function $\pi \colon \Efam \rightarrow \Rset_+$. 
In this scenario, an edge set $F$ is a feasible solution even if $F\cap C
= \emptyset$ for some $C \in \Efam$, but imposes a penalty
$\pi(C)$.
The objective is to minimize the sum of $w(F)$, $w(V(F))$, and the
penalty $\sum_{C \in \Efam: F \cap C=\emptyset}\pi(C)$.
The prize-collecting versions of the EDS and multicut problems 
are referred to as the {\em prize-collecting EDS problem}
and the {\em prize-collecting multicut problem}, respectively.

\subsection{Our results}
Thus far, the EDS problem has been applied only to
edge-weighted graphs.
The vertex cover problem can be reduced to
the EDS problem while preserving the approximation factors~\cite{Carr2001}.
The vertex cover problem is solvable by a 2-approximation algorithm, which is
widely regarded as the best possible approximation. Indeed, assuming the unique game conjecture, Khot and Regev~\cite{KhotR08}
proved that the vertex cover problem cannot be approximated within a factor better than $2$.
Fujito and Nagamochi~\cite{Fujito2002} showed that 
a 2-approximation algorithm is admitted by the EDS problem, which matches
the approximation hardness known for the vertex cover problem.
In the Appendix, we show that the EDS
problem in bipartite graphs
generalizes the set cover problem if assigned node weights and
generalizes the non-metric facility location problem if assigned edge
and node weights. 
This implies that including node weights increases difficulty of the
problem even in bipartite graphs.

On the other hand, Kamiyama~\cite{Kamiyama2010} proved that the
prize-collecting EDS problem in an edge-weighted graph
admits an exact polynomial-time algorithm if the graph is a tree.
As one of our main results, we show that this idea is extendible to
problems in edge- and node-weighted trees.

\begin{theorem}\label{thm.eds}
 The prize-collecting EDS problem 
 admits a polynomial-time exact algorithm for edge- and node-weighted trees.
\end{theorem}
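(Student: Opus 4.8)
The plan is to solve the problem by dynamic programming on the tree, exploiting the fact that whether an edge $f=(x,y)$ is dominated by an edge set $F$ depends only on which edges incident to $x$ or to $y$ lie in $F$. Fix an arbitrary root $r$ of the given tree $T$. Then every edge of $T$ is the edge $e_v$ joining a unique non-root vertex $v$ to its parent, and the edges incident to $v$ are precisely $e_v$ together with the edges $(v,c)$ from $v$ to its children. In particular, for a non-root $v$ the domination status of $e_v$ is determined by three facts: whether $e_v\in F$, whether some child edge of $v$ is in $F$, and whether some edge incident to the parent of $v$ is in $F$; the first two are decided inside the subtree $T(v)$, the last outside it.

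For a vertex $v$ with subtree $T(v)$ and for $a,b\in\{0,1\}$, I would let $\mathrm{DP}[v][a][b]$ be the minimum, over all choices of which edges lying strictly inside $T(v)$ (the parent edges of the proper descendants of $v$) are put into $F$, of the following quantity: the sum of $w(f)$ over the chosen edges $f$; the sum of $w(u)$ over vertices $u\in T(v)\setminus\{v\}$ incident to a chosen edge; the term $w(v)\cdot[a=1\ \text{or}\ b=1]$, reflecting that $v\in V(F)$ precisely when $e_v\in F$ or some child edge of $v$ is chosen; and the sum of $\pi(\delta(f))$ over edges $f$ strictly inside $T(v)$ left undominated. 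Here the bit $a$ encodes the assumption ``$e_v\in F$'' — a promise that $v$'s parent will later keep, and which also affects whether $v$'s child edges are dominated — while $b$ encodes ``at least one child edge of $v$ lies in $F$''; these two bits are exactly what the rest of the tree needs to know about $T(v)$. The optimum of the instance is $\min_{b\in\{0,1\}}\mathrm{DP}[r][0][b]$, since the root has no parent edge.

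To compute $\mathrm{DP}[v][a][b]$ for $v$ with children $c_1,\dots,c_m$, one chooses the set $S\subseteq\{c_1,\dots,c_m\}$ of children $c_i$ with $(v,c_i)\in F$, which forces $b=[S\neq\emptyset]$. A child $c_i\in S$ contributes $w((v,c_i))+\min_{b'}\mathrm{DP}[c_i][1][b']$, and the edge $(v,c_i)$, being in $F$, incurs no penalty. A child $c_i\notin S$ contributes $\min_{b'}\big(\mathrm{DP}[c_i][0][b']+\pi(\delta((v,c_i)))\cdot[\,a=0,\ S=\emptyset,\ b'=0\,]\big)$, since $(v,c_i)$ is dominated exactly when $e_v\in F$, or another child edge of $v$ is in $F$, or a child edge of $c_i$ is in $F$. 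Adding $w(v)\cdot[a=1\ \text{or}\ b=1]$ to the sum over all children yields $\mathrm{DP}[v][a][b]$; for $b=0$ one must take $S=\emptyset$, and for $b=1$ the only interaction among the children is the constraint $S\neq\emptyset$, handled by picking for each child the cheaper of its two options and, if that produces $S=\emptyset$, forcing in the single child that is least expensive to add. Each vertex is then processed in time linear in its degree, so the algorithm runs in time linear in the size of $T$; correctness follows by induction over the subtrees, since once the state $(a,b)$ at $v$ is fixed the subproblems at the children become independent.

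The step requiring the most care is the choice of the state: one must verify that the two bits $(a,b)$ genuinely summarize everything the rest of the tree needs from $T(v)$ — namely the membership of $e_v$ and of the child edges of $v$ — and that, conditioned on these bits and on the edge decisions local to $v$, the domination status of every edge weakly inside $T(v)$ and the event $v\in V(F)$ are fully determined, so that no cost is double-counted or dropped. (This recovers Kamiyama's algorithm for edge-weighted trees when all node weights vanish; the node weights enter only through the term $w(v)\cdot[a=1\ \text{or}\ b=1]$.) An alternative route would be to write down the strengthened LP relaxation for this instance and show it always has an integral optimum on trees, e.g.\ via a primal--dual or uncrossing argument; the dynamic program above is the more direct and self-contained option.
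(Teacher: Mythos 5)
Your dynamic program is correct, but it proves the theorem by a genuinely different route than the paper. You root the tree and keep, for each vertex $v$, the two bits ``is the parent edge $e_v$ in $F$'' and ``is some child edge of $v$ in $F$''; since every edge incident to a proper descendant of $v$ lies inside $T(v)$, these bits are indeed exactly the interface between $T(v)$ and the rest of the tree, and your bookkeeping charges each edge weight at its upper endpoint, each node weight at that node's own table entry, and each penalty $\pi(\delta(e_{c}))$ at the parent of $c$ with the correct domination condition ($a=0$, $S=\emptyset$, $b'=0$), so nothing is dropped or double-counted; the ``force in the cheapest child'' fix for the $S\neq\emptyset$ constraint is also sound. (One small wording issue: the table entry $\mathrm{DP}[v][a][b]$ should be defined as a minimum only over configurations consistent with the bit $b$, which is how your recursion in fact uses it; also, the parenthetical claim that this ``recovers Kamiyama's algorithm'' is inaccurate, since Kamiyama's algorithm, like the paper's, is primal--dual, but this does not affect correctness.) The paper instead proves the theorem by a primal--dual algorithm driven by the strengthened relaxation $\LP(I)$: it processes the tree bottom-up (base case a star, then Cases A and B), simultaneously building the solution $F$ and a feasible solution $(\xi,\nu,\mu)$ to $\Dual(I)$ whose value matches the cost of $F$, which certifies optimality. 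Your approach buys simplicity and speed: it is elementary, self-contained, and runs in linear time without any LP machinery. The paper's approach buys the conceptual payoff it is after: the explicit dual certificate shows that the proposed LP relaxation has no integrality gap for this problem on trees, which is the point of introducing the relaxation and parallels the multicut analysis; your DP establishes the theorem as stated but says nothing about the relaxation.
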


Theorem~\ref{thm.eds} will be proven in Section~\ref{sec.eds}.
As demonstrated in the Appendix, the prize-collecting EDS problem
in general edge- and node-weighted graphs admits an $O(\log |V|)$-approximation, which matches
the approximation hardness
on the set cover problem and the non-metric facility location problem.

The multicut problem is hard even in edge-weighted graphs; the best
reported approximation factor is $O(\log k)$~\cite{Garg1996}. The
multicut problem is known to be both
NP-hard and MAX SNP-hard~\cite{Dahlhaus1994}, 
and admits no constant factor approximation algorithm under the unique game conjecture~\cite{Chawla2006}.
However,
Garg, Vazirani, and Yannakakis~\cite{Garg1997} 
developed a 2-approximation algorithm
for the multicut problem with edge-weighted trees.
They also mentioned that, although the graphs are restricted to trees, 
the structure of the problem is sufficiently rich.
They showed that the tree multicut problem includes the set cover problem with tree-representable set systems.
They also showed that the vertex cover problem in general graphs is simply reducible to the multicut problem in star graphs, while preserving the approximation factor. This
implies that the 2-approximation seems to be tight for the multicut problem in trees.
As a second main result, we extended this 2-approximation to 
edge- and node-weighted trees, as stated in the following theorem.

\begin{theorem}\label{thm.multicut}
 The prize-collecting multicut problem
 admits a 2-approximation algorithm for edge- and node-weighted trees.
\end{theorem}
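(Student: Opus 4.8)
The plan is to set up a primal--dual algorithm on a \emph{strengthened} relaxation of the prize-collecting multicut problem --- one that, unlike the natural edge-and-node LP, has bounded integrality gap on trees --- and to mimic the algorithm of Garg, Vazirani and Yannakakis~\cite{Garg1997} for edge-weighted trees, augmented with the prize-collecting primal--dual mechanism. Write $\LP$ for this relaxation and $\Dual$ for its LP-dual. Root $G$ at an arbitrary vertex; then each demand pair $i$ has a unique path $P_i$ in $G$ with top vertex $\lca(s_i,t_i)$, and $\Efam=\bigcup_{i=1}^{k}\{P_i\}$. As usual we maintain a feasible solution of $\Dual$ (all zero initially) and call an edge, or a vertex, \emph{tight} when its packing constraint in $\Dual$ is satisfied with equality.

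\emph{The algorithm.} Scan the vertices of $G$ by non-increasing depth. Upon reaching $v$, for every pair $i$ with $\lca(s_i,t_i)=v$ whose path is not yet cut by the edges collected so far, raise $y_i$ --- together with the auxiliary dual variables that, by complementary slackness for $\LP$, must grow in step with $y_i$ and spread ``load'' to the vertices of $P_i$ --- until either (a) some edge of $P_i$ becomes tight, whereupon every newly tight edge is added to the working set $F$ (its endpoints thereby entering $V(F)$), or (b), in the prize-collecting case, $y_i$ reaches $\pi(P_i)$, whereupon $i$ is abandoned and its penalty will be paid. When all vertices have been scanned, run a reverse-delete pass: go through the edges of $F$ in the reverse of the order they were added and delete an edge whenever the remaining edges still cut every non-abandoned $P_i$. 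Output the surviving $F$ and $V(F)$.

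\emph{The analysis.} Feasibility of the output is immediate from the stopping rule during the scan and the invariant of the reverse-delete pass, and the constructed dual is feasible because every packing constraint is respected throughout the growth phase while rule (b) keeps the demand variables feasible. The combinatorial core is the claim --- proved exactly as in the edge-weighted case of~\cite{Garg1997} --- that for every non-abandoned pair $i$, the path $P_i$ contains at most two edges of the final $F$; the point is that each surviving edge has a ``witness'' pair whose path it alone cuts, and a depth comparison among the lca's of these witnesses along $P_i$ leaves room for at most two such edges. Granting this, one follows the standard primal--dual chain: by tightness, for each $e=uv\in F$ the value $w(e)$ plus the node weight that $\Dual$ has charged to $u$ and $v$ through $e$ equals $\sum_{i:\,e\in P_i}y_i$; summing over $e\in F$, using that each non-abandoned $P_i$ feeds its $y_i$ to at most two edges of $F$, that the auxiliary dual variables are arranged so that the node charges over all of $F$ cover $w(V(F))$ in full, and that each abandoned pair's $y_i$ equals its penalty, gives
\[
w(F)+w(V(F))+\sum_{\text{abandoned }i}\pi(P_i)\ \le\ 2\cdot\mathrm{val}(\Dual)\ \le\ 2\cdot\mathrm{OPT}_{\LP}\ \le\ 2\cdot\mathrm{OPT},
\]
where $\mathrm{OPT}$ is the optimum of the prize-collecting multicut instance and the last step uses that $\LP$, though stronger than the natural relaxation, is still a valid relaxation.

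\emph{Where the difficulty lies.} The crux --- and the reason the strengthened relaxation is indispensable --- is squeezing $w(V(F))$ inside twice the dual without a third factor appearing. In the naive relaxation each node weight $w(v)$ can only be charged in full to a single incident edge, which costs a factor as large as the degree and accounts for the large integrality gap; instead one must grow the auxiliary dual variables during the scan so that $w(v)$ is distributed over the edges of $P_i$ at $v$ exactly as an (almost) tight configuration of $\LP$ prescribes, and then check that a vertex lying on two surviving edges of $F$ is not charged beyond what the factor $2$ permits. Making these bookkeeping choices simultaneously consistent with the reverse-delete step and with the at-most-two-edges-per-path claim is the main work; everything else is the usual template of~\cite{Garg1997} combined with the prize-collecting mechanism.
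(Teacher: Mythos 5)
There is a genuine gap: you correctly identify the crux (charging $w(V(F))$ without losing an extra factor), but the two mechanisms that actually resolve it are missing, and the steps you substitute for them would fail. First, in the growth phase your only moves are ``raise $y_i$ until an edge of $P_i$ becomes tight'' and the penalty cutoff. In the strengthened dual, however, the node constraints can already be saturated by variables $\mu(v,j)$ belonging to \emph{earlier} demands $j$, so an edge of $P_i$ can become a bottleneck (edge and both endpoints saturated) while its own constraint for $i$ is far from paid; at that point your algorithm is stuck, and adding such an edge destroys the identity you later invoke, namely that each $e\in F$ has $w(e)$ and the weights of both its endpoints fully covered by dual variables of the demands through $e$. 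The paper resolves this with the ``relaxable node'' machinery: it \emph{decreases} $\mu(v,j)$ for earlier demands (recursively re-routing their dual along $P_j$) and only adds an edge once it is a bottleneck with both endpoints non-relaxable (Fact~\ref{fact.relaxable}, Lemma~\ref{lem.increasing}). Nothing in your proposal plays this role, and without it the tightness statements your final chain relies on are simply not available.

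Second, your deletion step is a plain reverse-delete (``remove an edge if the rest still cuts every non-abandoned path''), and the bound you then need is not just that each surviving $P_i$ contains at most two edges of $F'$, but also that every node charge is absorbed: if $v\in V(F')\cap V_i$ and $\mu(v,i)>0$, there must be an edge of $F'\cap E_i$ incident to $v$ (condition (iii) of Lemma~\ref{lem.multicut-key}); this is exactly what lets the paper write the cost as $\sum_i\bigl(\sum_{e\in F'\cap E_i}\nu(e,i)+\sum_{v\in V(F')\cap V_i}\mu(v,i)\bigr)\le\sum_i|F'\cap E_i|\,\xi(i)\le 2\sum_i\xi(i)$. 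A reverse-delete gives no such guarantee: a surviving edge of some other demand's path may keep $v$ in $V(F')$ while every edge of $E_i$ at $v$ has been deleted, leaving $\mu(v,i)$ charged to the primal but not covered by $|F'\cap E_i|\,\xi(i)$. The paper's deletion phase is therefore not minimal pruning at all --- it processes demands in decreasing order of $i$ and sometimes \emph{adds} a bottleneck edge of $E_i$ at such a node $v$ (deleting a descendant of it) precisely to restore (iii) while preserving (ii). Your closing paragraph acknowledges that these bookkeeping choices are ``the main work,'' but that work is the theorem; as written, the proposal asserts the needed invariants rather than establishing them, and with the stated algorithm they are false in general.
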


Both algorithms claimed in Theorems~\ref{thm.eds} and \ref{thm.multicut} are primal-dual algorithms,
that use the LP relaxations we propose. 
These algorithms fall in the same frameworks as
those proposed in~\cite{Garg1997,Kamiyama2010} for edge-weighted graphs.
However, they need several new ideas to achieve the claimed performance because our LP relaxations
are much more complicated than those used in~\cite{Garg1997,Kamiyama2010}. 

The remainder of this paper is organized as follows.
After surveying related work in Section~\ref{sec.related},
we define our LP relaxation for the prize-collecting graph covering problem in Section~\ref{sec.lp}.
Using this relaxation, we prove Theorems~\ref{thm.eds} and~\ref{thm.multicut} in Sections~\ref{sec.eds} and~\ref{sec.multicut}, respectively.
The paper concludes with Section~\ref{sec.conclusion}.
In the Appendix, we show that the prize-collecting EDS
problem in edge- and node-weighted graphs generalizes
the set cover problem and facility location problem,
and admits an $O(\log |V|)$-approximation algorithm.

\section{Related work}\label{sec.related}

As mentioned in Section~\ref{sec.intro},
the graph covering problem in node-weighted graphs has thus far been applied to the
Steiner tree problem and its generalizations.
Klein and Ravi~\cite{KleinR95} proposed an $O(\log |V|)$-approximation
algorithm for the Steiner tree problem with node weights.
Nutov~\cite{Nutov10node-weights,Nutov12uncrossable} extended this algorithm to 
the survivable network design problem with
higher connectivity requirements.
An $O(\log |V|)$-approximation algorithm for the prize-collecting Steiner
tree problem with node weights was provided by Moss and Rabani~\cite{MossR07}; however, as
noted by K\"onemann, Sadeghian, and Sanit\`a~\cite{Konemann13CoRR}, the proof of this algorithm contains a technical error. This error was corrected in~\cite{Konemann13CoRR}.
Bateni, Hajiaghayi, and Liaghat~\cite{BateniHL13} proposed an $O(\log
|V|)$-approximation algorithm for the prize-collecting Steiner forest
problem and applied it to the budgeted Steiner tree problem.
Chekuri, Ene, and Vakilian~\cite{ChekuriEV12} gave an 
$O(k^2 \log |V|)$-approximation algorithm for
the prize-collecting survivable
network design problem with edge-connectivity requirements of maximum
value $k$. Later, they improved their approximation factor to $O(k \log |V|)$,
and also extended it to
node-connectivity requirements (see \cite{Vakilian13}).
Naor, Panigrahi, and Singh~\cite{DBLP:conf/focs/NaorPS11} established
an online algorithm for the Steiner tree problem with node weights
which was extended to the Steiner forest problem by Hajiaghayi, Liaghat, and Panigrahi~\cite{Hajiaghayi2013}.
The survivable network design problem with
node weights has also been extended to a problem called the network activation
problem 
\cite{Panigrahi11wireless,Nutov12activation,Fukunaga2014}.

The prize-collecting EDS problem generalizes the
$\{0,1\}$-EDS problem, in which given demand edges require being
dominated by a solution edge set.
The $\{0,1\}$-EDS problem in general edge-weighted graphs admits a
$8/3$-approximation, which was proven by
Berger~et~al.~\cite{Berger2007}.
This $8/3$-approximation
was extended to the prize-collecting EDS problem by Parekh~\cite{Parekh2008}.
Berger and Parekh~\cite{Berger2008} designed an exact algorithm for the $\{0,1\}$-EDS problem in
edge-weighted trees, but 
their result contains an error~\cite{Berger2012}.
Since the prize-collecting EDS problem embodies the $\{0,1\}$-EDS problem,
the latter problem could be alternatively solved by an algorithm
developed for the prize-collecting EDS problem in edge-weighted trees, proposed by
Kamiyama~\cite{Kamiyama2010}.

\section{LP relaxations}
\label{sec.lp}

This section discusses LP relaxations for the prize-collecting 
graph covering problem in edge and node-weighted graphs.

In a natural integer programming (IP) formulation of the graph covering problem,
each edge $e$ is associated with a variable $x(e)\in \{0,1\}$, and each node $v$ is associated with a
variable $x(v) \in \{0,1\}$.
$x(e)=1$ denotes that $e$ is selected as part of the solution set, while
$x(v)=1$ indicates the selection of an edge incident to $v$.
In the prize-collecting version, 
each demand set $C \in \Efam$ is also associated with a variable $z(C) \in \{0,1\}$,
where $z(C)=1$ indicates that the covering constraint corresponding
to $C$ is not satisfied.
For $F\subseteq E$,
we let $\delta_F(v)$ denote the set of edges incident to $v$ in $F$. The subscript may be
removed when $F=E$.
An IP of the prize-collecting graph
covering problem is then formulated as follows.
\begin{align*}
&\mbox{minimize} && 
 \sum_{e \in E} w(e)x(e) + \sum_{v \in V}w(v)x(v) + \sum_{C \in \Efam}\pi(C)z(C)\\
 &\mbox{subject to} &&
 \sum_{e \in C} x(e) \geq 1-z(C) && \mbox{for } C \in \Efam,\\
 &&& x(v) \geq x(e) && \mbox{for } v \in V, e \in \delta(v),\\
 &&& x(e) \geq 0 && \mbox{for } e \in E,\\
 &&& x(v) \geq 0 && \mbox{for } v \in V,\\ 
 &&& z(C) \geq 0 && \mbox{for } C \in \Efam.
\end{align*}
In the above formulation, the first constraints specify the covering
constraints, while the second constraints indicate that if the solution contains an edge $e$
incident to $v$, then $x(v)$ = 1.
In the graph covering problem (without penalties),
$z$ is fixed at 0.

To obtain an LP relaxation, we relax the definitions of $x$
and $z$ in the above IP 
to $x \in \Rset_+^{E \cup V}$ and $z \in \Rset_+^C$.
However, this relaxation may introduce a large integrality gap 
into the graph covering problem with node-weighted graphs, as shown in the following example.
Suppose that $\Efam$ comprises a single edge set $C$,
and each edge in $C$ is incident to a node $v$.
Let the weights of all edges and nodes other than $v$ be 0.
In this scenario, the optimal value of the graph covering problem is $w(v)$.
On the other hand, the LP relaxation admits a feasible solution $x$ such
that $x(v)=1/|C|$ and
$x(e)=1/|C|$ for each edge $e \in C$. 
The weight of this solution is $w(v)/|C|$,
and the integrality gap of the relaxation for this instance
is $|C|$.

This phenomenon occurs even in the EDS problem
and multicut problem in trees.
For instance, consider a star of $n$ leaves in the EDS problem.
The weight of all edges and nodes is 0 except the center node
$v$, whose weight is 1.
In this instance of the EDS problem, the weight of any
EDS is 1. On the other hand, LP relaxation admits a feasible
solution $x$ such that
$x(e)=1/n$ for each edge $e$, and $x(u)=1/n$ for each node $u$.
Since the weight of this fractional solution is $1/n$, the integrality
gap is $n$.

Let $uv$ denote an edge that joints nodes $u$ and $v$.
To determine the integrality gap in the multicut problem, 
we consider that each edge $uv$ in the star is
subdivided into two edges $us$ and $sv$.
The subdivision imposes a weight of 1 on node $s$.
All edges and remaining nodes (i.e., the center node and all leaves) 
have weight 0.
All pairs of leaves are demand pairs.
A path between the center node and a leaf is called a \emph{leg}.
In this instance, any multicut must choose at least one edge from each
of $n-1$ legs. Hence, the minimum multicut weight is $n-1$.
On the other hand, if $x(e)=1/4$ for every edge $e$ and $x(v')=1/4$ for
every node $v'$, the weight
is $n/4$ (such a fractional solution is feasible to the relaxation). Hence, the integrality gap of the relaxation is at least $4$.
By contrast, Garg, Vazirani, Yannakakis~\cite{Garg1997} proved that
the integrality gap of the relaxation is at most 2 when node weights are not considered.

The above poor examples can be excluded if the second
constraints in the relaxation are replaced by $x(v) \geq \sum_{e \in
\delta(v)}x(e)$ for $v \in V$.
However, the LP obtained by this modification does not relax the graph covering
problem
if the optimal solutions contain high-degree nodes.
Thus, we introduce a new variable $y(C,e)$ for each pair
of $C \in \Efam$ and $e \in C$,
and replace the second constraints by $x(v)\geq \sum_{e \in
\delta(v)}y(C,e)$, where
$v \in V$ and $C\in \Efam$.
$y(C,e)=1$ indicates that $e$ is chosen to satisfy the covering
constraint of $C$, and $y(C,e)=0$ implies the opposite.
Roughly speaking, $y(C,\cdot)$ represents a minimal fractional solution
for covering a single demand set $C$. If a single covering constraint is imposed, the degree of each node is
at most one in any minimal integral solution. Then
the graph covering problem is relaxed by the LP even after
modification. 
Summing up, we formulate our LP relaxation for 
an instance $I=(G,\Efam,w,\pi)$ of
the prize-collecting graph covering problem as follows.
\begin{align*}
&\LP(I)=\\
&\mbox{minimize} && 
 \sum_{e \in E} w(e)x(e) + \sum_{v \in V}w(v)x(v) + \sum_{C \in \Efam}\pi(C)z(C)\\
 &\mbox{subject to} &&
 \sum_{e \in C} y(C,e) \geq 1 -z(C)&& \mbox{for } C \in \Efam,\\
 &&& x(v) \geq \sum_{e\in \delta_C(v)}y(C,e) && \mbox{for } v \in V, C \in \Efam,\\
 &&& x(e) \geq y(C,e)  && \mbox{for } C \in \Efam, e \in C, \\
 &&& x(e) \geq 0&&  \mbox{for } e \in E, \\
 &&& x(v) \geq 0 &&  \mbox{for } v \in V,\\
 &&& y(C,e) \geq 0 && \mbox{for } C \in \Efam, e \in C,\\
 &&& z(C) \geq 0 && \mbox{for } C \in \Efam.
\end{align*}

\begin{theorem}
 Let $I$ be an instance of the 
 prize-collecting graph covering problem in edge- and node-weighted graphs.
 $\LP(I)$ is not greater than the optimal value of $I$.
\end{theorem}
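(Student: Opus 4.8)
The plan is the standard relaxation argument: exhibit, for an arbitrary feasible solution $F\subseteq E$ of the prize-collecting graph covering problem, a feasible point $(x,y,z)$ of $\LP(I)$ whose objective value equals the cost of $F$. Since $\LP(I)$ is the minimum of the same objective over a feasible region containing all such points, this gives $\LP(I)\le$ cost of $F$ for every feasible $F$, and hence $\LP(I)\le$ the optimal value of $I$. Concretely, I would set $x(e)=1$ for $e\in F$ and $x(e)=0$ otherwise, $x(v)=1$ for $v\in V(F)$ and $x(v)=0$ otherwise, and $z(C)=1$ for those $C\in\Efam$ with $F\cap C=\emptyset$ and $z(C)=0$ otherwise. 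The only part requiring thought is the definition of the auxiliary variables $y(C,e)$.

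For each $C\in\Efam$ with $F\cap C\neq\emptyset$, fix one edge $e_C\in F\cap C$ arbitrarily, put $y(C,e_C)=1$, and put $y(C,e)=0$ for all other $e\in C$; for each $C$ with $F\cap C=\emptyset$, put $y(C,e)=0$ for all $e\in C$. Intuitively, $y(C,\cdot)$ is the indicator of the minimal one-edge cover $\{e_C\}$ of the single demand $C$, which is exactly the kind of fractional cover the tighter node constraint was designed around. Then I would check the three structural constraint families. The covering constraint $\sum_{e\in C}y(C,e)\ge 1-z(C)$ holds with equality in both cases by construction. The edge constraint $x(e)\ge y(C,e)$ holds because $y(C,e)>0$ occurs only for $e=e_C\in F$, where $x(e)=1$. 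For the node constraint $x(v)\ge\sum_{e\in\delta_C(v)}y(C,e)$, observe that for a fixed $v$ the sum picks up the term $y(C,e_C)=1$ precisely when $e_C$ is incident to $v$ and is $0$ otherwise; when $e_C$ is incident to $v$ we have $e_C\in F$, so $v\in V(F)$ and $x(v)=1$. The remaining nonnegativity constraints are immediate.

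Finally, the objective value of $(x,y,z)$ is $\sum_{e\in F}w(e)+\sum_{v\in V(F)}w(v)+\sum_{C\colon F\cap C=\emptyset}\pi(C)=w(F)+w(V(F))+\sum_{C\colon F\cap C=\emptyset}\pi(C)$, which is exactly the cost of $F$ in $I$. Taking $F$ to be an optimal solution of $I$ then yields $\LP(I)\le$ the optimal value of $I$. I do not expect a real obstacle here; the single point that needs care is that a minimal cover of one demand $C$ touches each node through at most one of its edges, which is precisely what makes the one-edge assignment $y(C,e_C)$ consistent with the strengthened constraint $x(v)\ge\sum_{e\in\delta_C(v)}y(C,e)$ — and this is the reason this constraint, unlike $x(v)\ge\sum_{e\in\delta(v)}x(e)$, does not destroy the relaxation property.
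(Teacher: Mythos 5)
Your proposal is correct and matches the paper's own proof essentially verbatim: construct $(x,y,z)$ from a solution $F$ by choosing one edge $e_C\in F\cap C$ per covered demand, setting $y(C,e_C)=1$, and verifying the three constraint families, so the LP value is bounded by the cost of $F$. The only cosmetic difference is that you argue for an arbitrary feasible $F$ and then specialize to an optimal one, whereas the paper starts directly from an optimal solution.
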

\begin{proof}
 Let $F$ be an optimal solution of $I$.
 We define a solution $(x,y,z)$ of $\LP(I)$ from $F$.
 For each $C \in \Efam$, we set $z(C)$ to 0 if $F \cap C\neq
 \emptyset$,
 and $1$ otherwise.
 If $F\cap C \neq \emptyset$,
 we choose an arbitrary edge $e \in F\cap C$, and let
 $y(C,e)=1$.
 For the remaining edges $e'$, we assign $y(C,e')=0$.
In this way, the values of variables in $y$ are defined for each $C \in \Efam$.
 $x(e)$ is set to 1 if $e \in F$, and 0 otherwise.
 $x(v)$ is set to 1 if $F$ contains an edge incident to $v$, and 
0 otherwise.

 For each $C \in \Efam$ with $z(C)<1$, exactly one edge $e$ satisfies $y(C,e)=1$, and this $e$ is included in $F \cap C$.
 If $y(C,e)=1$, then $x(e)=1$, and
 each end node $v$ of $e$ satisfies $x(v)=1$.
 For a pair of $v \in V$ and $C \in \Efam$,
 $y(C,e)$ is one for exactly one edge $e \in \delta_C(v)$,
 and zero for the remaining edges in $\delta_C(v)$.
 Thus, $(x,y,z)$ is feasible.
 The objective value of $(x,y,z)$ in $\LP(I)$ is given by
 $w(F)+w(V(F))+\sum_{C \in \Efam: F\cap C=\emptyset}\pi(C)$, which is
 the optimal value of $I$, and the theorem is proven.
\end{proof}

In some graph covering problems, $\Efam$ is not explicitly given,
and $|\Efam|$ is not bounded by a polynomial on the input size of the problem.
In such cases, the above LP may not be solved in polynomial time because
it cannot be written compactly.
However, in this scenario, we may define a tighter LP than the natural relaxation
if we can find 
$\Efam_1,\ldots,\Efam_t \subseteq \Efam$ such that 
$\cup_{i=1}^t \Efam_i = \Efam$, 
$t$ is bounded by a polynomial of input size,
and the degree of each node is small in
any minimal edge set covering all demand sets in $\Efam_i$ for each $i
\in \{1,\ldots,t\}$.
Applying these conditions, the present author obtained a new
approximation algorithm for solving
a problem generalizing some prize-collecting graph covering problems~\cite{Fukunaga2014}.

\section{Prize-collecting EDS problem in trees}\label{sec.eds}
In this section, we prove Theorem~\ref{thm.eds}.
We regard the input graph $G$ as a rooted tree,
with an arbitrary node $r$ selected as the root.
The \emph{depth} of a node $v$ is the number of edges on the path between $r$ and $v$.
When $v$ lies on the path between $r$ and another node $u$,
we say that $v$ is an \emph{ancestor} of $u$ and $u$ is a \emph{descendant} of $v$.
If the depth of node $v$ is the maximum among all ancestors of $u$, then $v$ is defined as the \emph{parent} of $u$.
If $v$ is the parent of $u$, then $u$ is a \emph{child} of
$v$.
The upper and lower end nodes of an edge $e$ are denoted by $u_e$
and $l_e$, respectively.
We say that an edge $e$ is an ancestor 
of a node $v$ and $v$ is a descendant
of $e$ when $l_e=v$ or $l_e$ is an ancestor of $v$. Similarly, an edge $e$ is a descendant
of a node $v$ and $v$ is an ancestor
of $e$ if $v=u_e$ or $v$ is an ancestor of $u_e$.
An edge $e$ is defined as an ancestor of another edge $f$
if $e$ is an ancestor of $u_f$.

Recall that $\Efam=\{\delta(e)\colon e \in E\}$ in the EDS problem.
Let $I=(G,w,\pi)$ be an instance of the prize-collecting EDS problem.
We denote $\bigcup_{e \in \delta(v)}\delta(e)$ by $\delta'(v)$
for each $v\in V$.
Then the dual of $\LP(I)$ is formulated as follows.
  \begin{align}
  &{\Dual}(I)=\hspace*{-2em} \notag \\
  &\mbox{maximize} && 
 \sum_{e \in E} \xi(e) \notag\\
  &\mbox{subject to} &&
   \sum_{e \in \delta(e')}\nu(e',e) \leq w(e') && \mbox{for } e' \in E, \label{eds.d1}\\
   &&&\sum_{e \in \delta'(v)}\mu(v,e) \leq w(v) && \mbox{for } v \in V,\label{eds.d2}\\
   &&&\xi(e) \leq \mu(u,e) + \mu(v,e) + \nu(e',e) && \mbox{for } e\in E, e'=uv  \in \delta(e), \label{eds.d3}\\
   &&& \xi(e) \leq \pi(e) && \mbox{for } e \in E,\label{eds.d4}\\
   &&&\xi(e)\geq 0  && \mbox{for } e \in E,\notag\\
   &&&\nu(e',e) \geq 0 && \mbox{for } e' \in E, e \in \delta(e'),\notag\\
   &&&\mu(v,e) \geq 0&& \mbox{for } v \in V, e \in \delta'(v).\notag
 \end{align}

For an edge set $F \subseteq E$, let $\tilde{F}$ denote $\{e \in E \colon
\delta_F(e)=\emptyset\}$, and let $\pi(\tilde{F})$ denote $\sum_{e \in \tilde{F}}\pi(e)$.
For the instance $I$, our algorithm yields
a solution $F\subseteq E$ 
and a feasible solution $(\xi,\nu,\mu)$ to $\Dual(I)$,
both satisfying
\begin{equation}\label{eq.eds-optimiality}
w(F) + w(V(F)) + \pi(\tilde{F})
 \leq \sum_{e \in E}\xi(e).
\end{equation}
Since the right-hand side of \eqref{eq.eds-optimiality}
is at most $\LP(I)$,  $F$ is an optimal solution of
$I$.
We note that the dual solution $(\xi,\nu,\mu)$ is required only for
proving the optimality of the solution and need not be computed.

The algorithm operates by induction on the number of nodes of
depth exceeding one.
In the base case, all nodes are of depth one, indicating that $G$ is a star centered at $r$.
The alternative case is divided into two sub-cases:
Case A, in which a leaf
edge $e$ of maximum depth satisfies $\pi(e)>0$;
and Case B, which contains no such leaf edge.

\subsubsection*{Base case}
In the base case, $G$ is a star
centered at $r$.
Note that all edges in this graph are adjacent.
Let $\alpha_1=\min_{rv \in E}\{w(rv)+w(r)+w(v)\}$ and $\alpha_2=\sum_{e \in
E}\pi(e)$.
An edge $rv$ attaining $\alpha_1=w(rv)+w(r)+w(v)$ is denoted by $e^*=rv^*$.

If $\alpha_1 \geq \alpha_2$,
our algorithm sets $F$ as $\emptyset$,
and defines $\xi(e)$ as $\pi(e)$ for each $e \in E$.
Otherwise (i.e., $\alpha_1 < \alpha_2$), 
it specifies $F$ as $\{e^*\}$, and 
sets $\xi$ so that 
$\sum_{e\in E}\xi(e)=\alpha_1$, and $\xi(e)\leq \pi(e)$ for each $e \in E$,
which is possible because $\sum_{e \in E}\pi(e) =\alpha_2 > \alpha_1$.
Note that $F$ and $\xi$ defined in this way satisfy \eqref{eq.eds-optimiality}.

To completely define the dual solution,
we must define variables $\nu$ and 
$\mu$. Let $rv \in E$. To satisfy \eqref{eds.d3} for $e \in E$
and $rv$, 
the sum of $\mu(r,e)$, $\mu(v,e)$, and $\nu(rv,e)$
cannot be smaller than $\xi(e)$ for each $e\in E$.
Note that in \eqref{eds.d1}, $\nu(rv,e)$ is bounded from above for
$rv$, 
while in \eqref{eds.d2} for $v$, $\mu(r,e)$ and $\mu(v,e)$ are bounded for $r$
and $v$, respectively.
As an alternative interpretation, each $rv \in E$ has capacity $w(rv)$ shared by $\nu(rv,e)$,
$e \in \delta(rv)$, and each $v \in V$ has 
capacity $w(v)$ shared by $\nu(v,e)$, $e \in \delta'(v)$.
The following lemma claims that $\nu$ and $\mu$ may be set to satisfy all of
these constraints. 

\begin{lemma}\label{lem.star}
 Suppose that $G$ is a star.
 If $\alpha_1\geq \alpha_2$,
 let $\xi(e)=\pi(e)$ for each $e \in E$.
 Otherwise,
 suppose that $\xi$ is defined to satisfy 
 $\sum_{e \in E}\xi(e)=\alpha_1$, and $\xi(e)\leq \pi(e)$ for each $e\in E$.
 Then there exists
 a feasible solution $(\xi,\nu,\mu)$ to ${\Dual}(I)$.
\end{lemma}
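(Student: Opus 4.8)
The plan is to use heavily that $G$, being a star with center $r$, has every pair of edges adjacent. Write $E=\{e_1,\dots,e_n\}$ with $e_i=rv_i$. Then $\delta(e_i)=E$ for every $i$, and $\delta'(v)=E$ for every $v\in V$ (for a leaf $v_i$ since $\delta(v_i)=\{e_i\}$ and $\delta(e_i)=E$, and for $r$ since $\delta(r)=E$). So $\Dual(I)$ simplifies drastically: \eqref{eds.d1} becomes $\sum_i\nu(e_j,e_i)\le w(e_j)$ for each $j$, \eqref{eds.d2} becomes $\sum_i\mu(v,e_i)\le w(v)$ for each $v\in V$, and \eqref{eds.d3} becomes $\xi(e_i)\le\mu(r,e_i)+\mu(v_j,e_i)+\nu(e_j,e_i)$ for \emph{every} pair $i,j$. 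The first thing I would note is that in both cases of the lemma the prescribed $\xi$ satisfies $\sum_i\xi(e_i)\le\alpha_1$ together with $0\le\xi(e_i)\le\pi(e_i)$ for all $i$ (when $\alpha_1\ge\alpha_2$ the first inequality holds because $\sum_i\xi(e_i)=\alpha_2\le\alpha_1$), so the statement reduces to showing that every such $\xi$ extends to a feasible $(\xi,\nu,\mu)$.

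For the construction I would distinguish two subcases. If $\sum_i\xi(e_i)\le w(r)$, simply set $\mu(r,e_i)=\xi(e_i)$ for all $i$ and all other dual variables to $0$; then $\sum_i\mu(r,e_i)=\sum_i\xi(e_i)\le w(r)$, constraints \eqref{eds.d1}--\eqref{eds.d2} are trivial, \eqref{eds.d3} holds since $\mu(r,e_i)+\mu(v_j,e_i)+\nu(e_j,e_i)=\xi(e_i)$, and \eqref{eds.d4} is the hypothesis. Otherwise $\sum_i\xi(e_i)>w(r)$, and I would first spend the center budget: pick $\mu(r,e_i)\in[0,\xi(e_i)]$ with $\sum_i\mu(r,e_i)=w(r)$ (possible as $\sum_i\xi(e_i)>w(r)$), and let $\rho_i:=\xi(e_i)-\mu(r,e_i)\ge 0$ be the residual demand of $e_i$. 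The crucial step is the estimate
\[
\sum_i\rho_i \;=\; \sum_i\xi(e_i)-w(r) \;\le\; \alpha_1-w(r) \;=\; \min_j\bigl(w(e_j)+w(v_j)\bigr),
\]
so $\sum_i\rho_i\le w(v_j)+w(e_j)$ for \emph{every} leaf $v_j$. Fixing $j$, I would then split each $\rho_i$ as $\rho_i=\mu(v_j,e_i)+\nu(e_j,e_i)$ with both parts nonnegative, filling the budget $w(v_j)$ through the variables $\mu(v_j,\cdot)$ first and only then $w(e_j)$ through $\nu(e_j,\cdot)$; since the combined budget $w(v_j)+w(e_j)$ is at least $\sum_i\rho_i$, this leaves $\sum_i\mu(v_j,e_i)\le w(v_j)$ and $\sum_i\nu(e_j,e_i)\le w(e_j)$. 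Doing this for all $j$ defines $\mu$ and $\nu$; \eqref{eds.d1}--\eqref{eds.d2} hold by construction, \eqref{eds.d4} by hypothesis, and \eqref{eds.d3} holds with equality because $\mu(r,e_i)+\mu(v_j,e_i)+\nu(e_j,e_i)=\mu(r,e_i)+\rho_i=\xi(e_i)$.

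The only real content is the displayed inequality, and that is where the main obstacle lies: it works precisely because $\alpha_1$ is a \emph{minimum over edges} (not a sum or an average), which is exactly what guarantees that at every leaf $v_j$ the local budget $w(v_j)+w(e_j)$ suffices to absorb the residual demands left after the single shared budget $w(r)$ has been used up. Everything else is routine bookkeeping: unfolding $\delta$ and $\delta'$ in a star, and the elementary fact that several nonnegative numbers of total at most $a+b$ can each be written as a sum of two nonnegative parts with the first parts totalling at most $a$ and the second parts totalling at most $b$.
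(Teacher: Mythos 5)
Your proof is correct and is essentially the paper's argument: both are budget-filling constructions that exploit the fact that in a star all edges are mutually adjacent, together with the key bound $\sum_{e\in E}\xi(e)\le\alpha_1\le w(e_j)+w(r)+w(v_j)$ for every leaf edge $e_j=rv_j$, to saturate the shared center budget $w(r)$ first and then absorb the residuals into $\mu(v_j,\cdot)$ and $\nu(e_j,\cdot)$ while keeping \eqref{eds.d1}--\eqref{eds.d3} satisfied (with \eqref{eds.d3} tight). The only cosmetic difference is that you fix the allocation of $w(r)$ once up front and handle each leaf independently, whereas the paper runs the same water-filling sequentially per edge (filling $w(r)$, then $w(rv)$, then $w(v)$), noting that the center allocation is the same in every pass.
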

\begin{proof}
 First, we appropriately define $\nu$ and $\mu$.
 All variables of $\nu$ and $\mu$ are initialized to $0$.
 We fix an arbitrary ordering of edges in $E$, and denote the $i$-th edge by $e_i$.

 Let $rv \in E$.
 We sequentially select edges $e_1$ to $e_{|E|}$.
 On selection of $e_i$, we first increase 
 $\mu(r,e_i)$
 until the increase reaches $\xi(e_i)$ or 
 \eqref{eds.d2} becomes tight for $r$.
 If \eqref{eds.d2} is tightened for $r$ before 
 $\mu(r,e_i)$ is increased by $\xi(e_i)$,
 then $\nu(rv,e_i)$ is increased until
 the total increase reaches $\xi(e_i)$ or
 \eqref{eds.d1} becomes tight for $rv$.
Once \eqref{eds.d1} has tightened for $rv$,
 $\mu(v,e_i)$ is increased.
 The current iteration is terminated when the total increase reaches $\xi(e_i)$. If $i < |E|$, the algorithm advances
 to the next iteration, and processes $e_{i+1}$.
 Since $\sum_{i=1}^{|E|}\xi(e_i) \leq \min\{\alpha_1,\alpha_2\} \leq w(rv)+w(r)+w(v)$,
 all edges in $E$ can be processed before \eqref{eds.d2} becomes tight for $v$.

 The above process defines $\nu(rv,e_i)$, $\mu(v,e_i)$, and
 $\mu(r,e_i)$ for each $i \in \{1,\ldots,|E|\}$.
 This process is repeated for all $rv \in E$,
 but $\mu(r,e_i)$ is not increased beyond the first iteration.
 Note that $\mu(r,e_i)$ is assigned the same value 
 regardless of which edge $rv$ we begin with.
 Thus, $\nu$ and $\mu$ have been completely defined, and the feasibility of
 $(\xi,\nu,\mu)$ follows from their definitions.
\end{proof}

\subsection*{Case A}
In this case, a leaf edge $e$ of maximum depth satisfies $\pi(e)>0$.
Since Case A is not the base case, the depth of $l_e$ exceeds one.
Let $u$ denote the upper end node of $e$. Also let $v_0$ denote the parent of
$u$, and let $v_1,\ldots,v_k$ be the children of $u$.
Throughout this paper, the sets $\{1,\ldots,k\}$ and $\{0,\ldots,k\}$ are denoted by $[k]$,
and $[k]^*$, respectively.
The edge joining $v_i$ and $u$ is called $e_i$, with $i\in [k]^*$
($e$ is included in $\{e_1,\ldots,e_k\}$).
The relationships between these nodes and edges are illustrated in Figure~\ref{fig.casea}.
We define $\beta_1=\min_{i=0}^k(w(e_i)+w(u)+w(v_i))$, 
$\beta_2=\sum_{i=1}^k\pi(e_i)$, and $\beta=\min\{\beta_1,\beta_2\}$.
Let $i^* \in [k]^*$ be the index of an edge $e_{i^*}$ that attains $\beta_1=w(e_{i^*})+w(u)+w(v_{i^*})$.

\begin{figure}
 \centering
 \begin{tikzpicture}[line width=1.5pt, label distance = -3pt]

  \node[vertex,label=135:$u$] (u) at (0,0) {};
  \node[vertex,label=left:$v_0$] (v0) at (0,1) {};

  \foreach \i in {1,...,9}
  \node[vertex] (v\i) at (-2 + .4*\i, -1.5) {};

  \node at (-1.6,-1.9) {$v_1$};
  \node at (-1.15,-1.9) {$v_2$};
  \node at (1.7,-1.9) {$v_k$};
  \node at (-.8,-.4) {$e_1$};
  \node at (.8,-.4) {$e_k$};
  \node at (.3,.5) {$e_0$};

  \foreach \i in {0,...,9}
  \draw (u) -- (v\i);

\end{tikzpicture}
 \caption{Edges and nodes in Case~A}
 \label{fig.casea}
\end{figure}
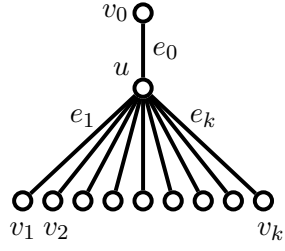

The algorithm constructs an 
instance $I'=(G',w',\pi')$ as follows.
Suppose that $\beta_1> \beta_2$.
In this case, $G'$ is defined as $G$, and
$\pi'\colon E'\rightarrow \Rset_+$ is defined by
\[
\pi'(e)=
\begin{cases}
 0 & \mbox{if } e \in \{e_1,\ldots,e_k\},\\
 \pi(e) & \mbox{otherwise.}
\end{cases} 
\]
For $\psi \in \Rset$, we denote $\max\{0,\psi\}$ by $(\psi)_+$.
The weight function $w'\colon V' \cup E'  \rightarrow \Rset_+$ is
defined by
\[
w'(e)=
 \begin{cases}
  (w(e_i)-(\beta-w(u))_+)_+ & \mbox{if } e=e_i, i \in [k]^*,\\
  w(e) & \mbox{otherwise}
 \end{cases}
\]
for each $e \in E'$, and 
\[
w'(v)=
 \begin{cases}
  (w(u)-\beta)_+ & \mbox{if } v = u,\\
  w(v_i)-(\beta-w(u)-w(e_i))_+ & \mbox{if } v=v_i,  i \in [k]^*,\\
  w(v) & \mbox{otherwise}
 \end{cases}
\]
for each $v \in V'$.
If $\beta_1 \leq \beta_2$, then
$G'=(V',E')$ is defined as the tree obtained by removing nodes
$v_1,\ldots,v_k$ and edges $e_1,\ldots,e_k$ from $G$, and $\pi'\colon E'\rightarrow \Rset_+$ is defined by
\[
\pi'(e)=
\begin{cases}
 0 & \mbox{if } e =e_0,\\
 \pi(e) & \mbox{otherwise.}
\end{cases} 
\]
$w'$ is defined identically to the case $\beta_1 >
\beta_2$, ignoring $w'(v_i)$ and $w'(e_i)$ for $i \in [k]$. 

If $\beta_1\leq \beta_2$, 
the number of nodes with depth exceeding one is lower in $G'$ than in $G$.
Hence, the algorithm inductively finds a solution $F'$ to $I'$ and a feasible dual
solution $(\xi',\nu',\mu')$ to $\Dual(I')$ that satisfy
\eqref{eq.eds-optimiality}.
Otherwise, the number of leaf edges $e$ of maximum depth with
$\pi'(e)>0$ is lower in $G'$ than in $G$. 
If $G'$ lacks edges of this type,
then instance $I'$ is categorized into Case~B,
and $F'$ and
$(\xi',\nu',\mu')$ are found as demonstrated below.
If such edges do exist in $G'$,
the algorithm finds $F'$ and $(\xi',\nu',\mu')$ by induction on the 
number of such edges.
Therefore, it suffices to show that the required $F$ and $(\xi,\nu,\mu)$
can be constructed from $F'$ and $(\xi',\nu',\mu')$, provided that
$F'$ and $(\xi',\nu',\mu')$ exist.

We now define $F$ and $(\xi,\nu,\mu)$.
$F$ is defined by
\[
 F=
 \begin{cases}
  F' \cup \{e_0\}& \mbox{if } \delta_{F'}(v_0)\neq \emptyset, \beta > w(u)+w(e_0),\\
  F' & \mbox{if }  \delta_{F'}(v_0)= \emptyset \mbox{ or }  \beta \leq
  w(u)+w(e_0), \beta_1 > \beta_2, \\
  F'\cup \{e_{i^*}\} & \mbox{if }
 \delta_{F'}(v_0)= \emptyset \mbox{ or } \beta \leq  w(u)+w(e_0), \beta_1 \leq \beta_2. \\
 \end{cases}
\]

\begin{lemma}
There exists a feasible solution $(\xi,\nu,\mu)$ to $\Dual(I)$ that
satisfies \eqref{eq.eds-optimiality} with $F$.
\end{lemma}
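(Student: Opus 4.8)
The goal is to lift the inductively obtained dual solution $(\xi',\nu',\mu')$ for $I'$ to a feasible dual solution $(\xi,\nu,\mu)$ for $I$ that certifies \eqref{eq.eds-optimiality}. The natural approach mirrors the base case (Lemma~\ref{lem.star}): keep $\xi$, $\nu$, $\mu$ equal to the primed values on all edges and nodes untouched by the reduction, and then locally repair things around $u$, its children $v_1,\dots,v_k$, and the edges $e_0,\dots,e_k$, using the ``slack'' that the reduction created by shaving $\beta$ off the weights $w(u)$, $w(e_i)$, $w(v_i)$. The key bookkeeping identity is that $w(e_i)+w(u)+w(v_i)$ exceeds $w'(e_i)+w'(u)+w'(v_i)$ by exactly $\beta$ when $i=i^*$ (and by at most $\beta$ in general), and that $\sum_{i\in[k]}\pi(e_i)=\beta_2\ge\beta$. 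So there is a budget of size $\beta$ per relevant edge $e_i$ to spend raising $\xi(e_i)$ from its primed value (which is $0$ on the $e_i$ in Case~A, since $\pi'(e_i)=0$) up to something that pays for adding $e_{i^*}$ or $e_0$ to $F$.

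**Steps, in order.** First I would split along the three cases in the definition of $F$. (i) If $\delta_{F'}(v_0)=\emptyset$ or $\beta\le w(u)+w(e_0)$ and $\beta_1>\beta_2$, then $F=F'$, $G'=G$, and the weights only shrank, so $(\xi',\nu',\mu')$ is almost already feasible for $\Dual(I)$ — I just need to check \eqref{eds.d1}--\eqref{eds.d2} still hold with the larger weights (they do, trivially) and that \eqref{eds.d3}--\eqref{eds.d4} are unaffected; set $(\xi,\nu,\mu)=(\xi',\nu',\mu')$. (ii) If instead $\beta_1\le\beta_2$, so $F=F'\cup\{e_{i^*}\}$ and $G'$ was obtained by deleting $v_1,\dots,v_k,e_1,\dots,e_k$: here I restore the deleted variables. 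Set $\xi(e_i)$ for $i\in[k]$ to values summing to $\beta$ with $\xi(e_i)\le\pi(e_i)$ (possible since $\beta_2\ge\beta$), keep $\xi(e)=\xi'(e)$ otherwise; then I must route dual weight around $u,v_{i^*},e_{i^*}$ to satisfy \eqref{eds.d3} for every pair $(e_i,e')$ with $e'\in\delta(e_i)$. The point is that $e_1,\dots,e_k$ are pairwise adjacent and all adjacent to $e_0$, so the constraints \eqref{eds.d3} for these edges only involve $\mu(u,\cdot)$, $\mu(v_i,\cdot)$, $\nu(e_i,\cdot)$, and the value $\xi(e_i)$ can be covered using the capacity $w(u)=w'(u)+\beta$ plus $w(e_{i^*})$ plus $w(v_{i^*})$, exactly as in Lemma~\ref{lem.star} but with total demand $\beta\le w(e_{i^*})+w(u)+w(v_{i^*})$. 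I would also have to check the ``$\delta'(u)$'' constraint \eqref{eds.d2} at $u$: the added $\mu(u,\cdot)$-mass is at most $\beta$, and $u$'s weight grew by exactly $\beta$, so it is absorbed; and since $l_e$ had maximum depth, no edge $e'$ below $u$ has its $\Dual$ constraint disturbed, while edges above $u$ are untouched. (iii) Finally the case $\delta_{F'}(v_0)\ne\emptyset$, $\beta>w(u)+w(e_0)$, where $F=F'\cup\{e_0\}$: here I must additionally pay $w(e_0)+w(u)+w(v_0)-[\text{whatever }F'\text{ already paid at }v_0]$ and at $u$; since $\delta_{F'}(v_0)\ne\emptyset$ the node weight $w(v_0)$ is already accounted for in the primed inequality, so the new cost is only $w(e_0)+w(u)$, and this is $<\beta$ by hypothesis, so the same $\beta$-budget on the $e_i$'s (or on $e_0$ via $\pi'$ in the subcase $\beta_1>\beta_2$) covers it; set $\xi(e_0)$ or redistribute among $\xi(e_i)$ accordingly, then extend $\nu,\mu$ around $u$ as before.

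**Where the difficulty lies.** The routine part is the weight arithmetic — verifying that every ``$(\cdot)_+$'' truncation in the definitions of $w'$ does the right thing and that the per-edge budget $\beta$ is never overspent; I would organize this by noting $w'(u)+w'(e_i)+w'(v_i)\ge w(u)+w(e_i)+w(v_i)-\beta$ always, with equality when $i=i^*$, which follows by a short case analysis on which of the three quantities $\beta$ ``eats into.'' The genuine obstacle is \textbf{simultaneously} satisfying all constraints \eqref{eds.d3} around $u$: raising $\xi(e_i)$ forces extra $\mu(u,e_i)$ or $\nu(e_i,e_i)$ or $\mu(v_i,e_i)$, and I must be sure the shared capacities $w(u)$ (at node $u$, constraint \eqref{eds.d2}) and $w(e_{i^*})$, $w(v_{i^*})$ are large enough — this is where the choice of $i^*$ as the minimizer of $w(e_i)+w(u)+w(v_i)$ and the bound $\sum_i\xi(e_i)=\beta\le\beta_1=w(e_{i^*})+w(u)+w(v_{i^*})$ are used, exactly as in Lemma~\ref{lem.star}, so I would structure this sub-argument as a direct invocation of (a parametrized version of) the greedy routing in that lemma's proof, applied to the ``local star'' at $u$. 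The one extra wrinkle not present in Lemma~\ref{lem.star} is that $\mu(u,\cdot)$ and $\nu(e_0,\cdot)$ may already carry mass from $(\xi',\nu',\mu')$ (when $e_0$ or edges above $u$ received dual value in the recursion); I would handle this by observing that the recursion used weights $w'$, which are strictly smaller, so the used-up capacity plus the newly-needed $\le\beta$ still fits under the original $w$.
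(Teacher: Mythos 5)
Your subcase (i) contains the real gap. When $\beta_1>\beta_2$ and $F=F'$, you simply set $(\xi,\nu,\mu)=(\xi',\nu',\mu')$; this is indeed feasible for $\Dual(I)$ (weights and penalties only grew), but it does \emph{not} satisfy \eqref{eq.eds-optimiality}. Induction only gives $w'(F')+w'(V(F'))+\pi'(\tilde{F'})\le\sum_e\xi'(e)$ with the \emph{reduced} weights $w'$ and with $\pi'(e_i)=0$ for $i\in[k]$; for $I$ the left-hand side uses $w$ and $\pi$, so the penalties $\pi(e_i)$ of undominated $e_i$ (and the restored amounts $w-w'$ on $u,e_i,v_i$ if $F'$ meets them) reappear and can total $\beta=\beta_2>0$. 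Concretely, on a path $r\,v_0\,u\,v_1$ with $w(u)=w(e_1)=w(v_1)=100$, $\pi(e_1)=10$ and $F'=\emptyset$, the left side exceeds your unchanged right side by $10$. This is exactly why the paper, in this case, raises $\xi(e_i)$ from $0$ to $\pi(e_i)$ for every $i\in[k]$ (gaining $\beta_2=\beta$ in the dual objective) and then devotes the proof to showing that this extra mass can be routed into $\mu(u,\cdot)$, $\nu(e_j,\cdot)$, $\mu(v_j,\cdot)$ without violating \eqref{eds.d1}--\eqref{eds.d2}; the case you declared trivial is the actual content of the lemma.

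Your bookkeeping for the routing is also oriented the wrong way. You assert $w'(u)+w'(e_i)+w'(v_i)\ge w(u)+w(e_i)+w(v_i)-\beta$ ``with equality when $i=i^*$,'' but feasibility needs the slack to be \emph{at least} the new dual mass at \emph{every} triple: in fact $w(u)+w(e_j)+w(v_j)-w'(u)-w'(e_j)-w'(v_j)=\beta$ for every $j\in[k]^*$ (using $\beta\le\beta_1\le w(u)+w(e_j)+w(v_j)$), and this uniform identity is what lets the total $\sum_i\xi(e_i)=\beta$ be routed along the chain $u,e_j,v_j$ on top of the at-most-$w'$ usage of the primed dual — in particular for $j=0$, where $e_0,v_0$ survive in $G'$ and may already carry dual, and this triple is not covered by an appeal to $i^*$ alone. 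Two further slips: when $\delta_{F'}(v_0)\ne\emptyset$ and $\beta>w(u)+w(e_0)$, the primal increase is $w(v_0)-w'(v_0)+w(e_0)+w(u)=\beta$ exactly, not just $w(e_0)+w(u)$ — the inductive inequality accounts for $w'(v_0)$, not $w(v_0)$; and the penalty-zeroing is swapped in your sketch: $\pi'$ zeroes $e_1,\dots,e_k$ when $\beta_1>\beta_2$ and zeroes $e_0$ when $\beta_1\le\beta_2$, so the extra dual must always come from $\xi(e_i)$, $i\in[k]$ (bounded by $\pi(e_i)$), never by ``setting $\xi(e_0)$.''
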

\begin{proof}
We first consider the case of $\beta_1 > \beta_2$.
 In this case, 
 $\xi'(e_i)=0$ follows from $\pi'(e_i)=0$ for each $i\in [k]$.
We define 
$\xi(e_i)$ as $\pi(e_i)$ for $i \in [k]$.
We also define $\nu(e_j,e_i)$, $\mu(v_j,e_i)$, and $\mu(u,e_i)$ 
such that $\nu(e_j,e_i)+\mu(v_j,e_i)+\mu(u,e_i)=\xi(e_i)$
holds for each $j  \in [k]^*$ and $i \in [k]$.
 The other dual variables are set to their values assigned in $(\xi',\nu',\mu')$.
Note that, for each
 $i \in [k]^*$, we have
 $w(u)+w(e_i)+w(v_i)-w'(u)-w'(e_i)-w'(v_i)=\beta_2=\sum_{i=1}^k \xi(e_i)$.
Hence, $\nu(e_j,e_i)$, $\mu(v_j,e_i)$, and
 $\mu(u,e_i)$ can be defined without violating \eqref{eds.d1} or \eqref{eds.d2} as
 follows.
 We sequentially collect edges $e_1$ to $e_{k}$.
On selection of $e_i$, 
 $\mu(u,e_i)$ is increased
 until the total increase reaches $\xi(e_i)$ or 
 \eqref{eds.d2} becomes tight for $u$.
 If \eqref{eds.d2} is tightened for $u$ before 
 $\mu(u,e_i)$ has increased by $\xi(e_i)$,
 then $\nu(e_j,e_i)$ is simultaneously increased for all $j\in [k]^*$
 until \eqref{eds.d1} becomes tight for $e_j$.
Once \eqref{eds.d1} has tightened for $e_j$,
$\mu(v_j,e_i)$ is increased instead of $\nu(e_j,e_i)$.

 We next consider the case of $\beta_1 \leq \beta_2$.
 In this scenario, $\xi'(e_0)=0$ holds because $\pi'(e_0)=0$.
 We define $\xi(e_i)$, $i\in [k]$ such that $\xi(e_i)\leq \pi(e_i)$ for
 each $i \in [k]$ and $\sum_{i=1}^k \xi(e_i)=\beta_1$, which
 is possible because $\sum_{i=1}^k \pi(e_i)=\beta_2 \geq \beta_1$.
 $\nu(e_i,e_0)$ and $\mu(v_i,e_0)$ are set to $0$ for each $i\in [k]$.
 We also define $\nu(e_j,e_i)$, $\mu(v_j,e_i)$, and $\mu(u,e_i)$ 
 such that $\nu(e_j,e_i)+\mu(v_j,e_i)+\mu(u,e_i)=\xi(e_i)$ 
 for each $j \in [k]^*$ and $i \in [k]$
 as specified for $\beta_1 > \beta_2$.
 The other variables are set to their values assigned in
 $(\xi',\nu',\mu')$. The feasibility of $(\xi,\nu,\mu)$ follows from its definition.

 We now prove that $F$ and $\xi$ satisfy \eqref{eq.eds-optimiality}.
 Without loss of generality, we can assume $|\delta_{F'}(u)|\leq 1$
 (if this condition is false, we can remove edges $e_i$, where $i\in [k]$, from $F'$ until
 $|\delta_{F'}(u)|=1$).
 The objective value of $F$ exceeds that of $F'$ by at most
 $\beta$,
 unless $e_0 \in F \setminus F'$ and $i^*\neq 0$.
 If $e_0 \in F \setminus F'$ and $i^*\neq 0$, then $\delta_{F'}(v_0)\neq
 \emptyset$ and $\beta > w(u)+w(e_0)$ by the definition of $F$.
 Since $\delta_{F'}(v_0)\neq  \emptyset$, $w'(v_0)$ is counted in 
 the objective value of $F'$.
 Thus, the objective values increases from $F'$ to $F$
by $w(v_0)-w'(v_0) + w(e_0)+w(u)$. From $\beta>w(u)+w(e_0)$, it follows that 
 $w'(u)=w'(e_0)=0$, therefore, the objective function increases by $\beta$.
 Since $\sum_{e \in E}\xi(e) - \sum_{e \in E'}\xi(e)=\beta$,
 \eqref{eq.eds-optimiality} is satisfied.
\end{proof}

\subsection*{Case B}
In this case, $\pi(e)=0$ holds for all leaf edges $e$ of maximum depth.
Let $s$ be the grandparent of a leaf node of maximum depth.
Also, let $u_1,\ldots,u_k$ be the children of $s$, and $e_i$ be the edge joining
$s$ and $u_i$ for $i \in [k]$.
In the following discussion, we assume that $s$ has a parent, and 
that each node $u_i$ has at least one child.
This discussion is easily modified to cases in which 
$s$ has no parent or some node $u_i$ has no child.
We denote the parent of $s$ by $u_0$, and the edge between $u_0$ and $s$ by
$e_0$.
For each $i \in [k]$,
let $V_i$ be the set of children of $u_i$, and
$H_i$ be the set of edges joining $u_i$ to its child nodes in $V_i$. Also define $h_i=u_iv_i$ as an edge that attains 
$\min_{u_iv \in H_i} (w(u_iv)+w(v))$.
The relationships between these nodes and edges are illustrated in Figure~\ref{fig:eds-caseb}.

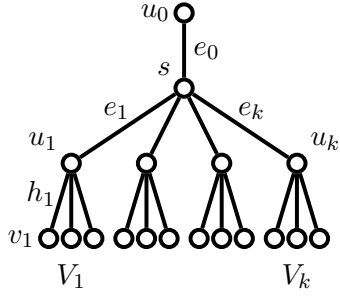
\begin{figure}[t]
 \centering
 \begin{tikzpicture}[line width=1.5pt, label distance = -3pt]

  \node[vertex,label=135:$s$] (s) at (0,0) {};
  \node[vertex,label=left:$u_0$] (u0) at (0,1) {};
  \node[vertex,label=135:$u_1$] (u1) at (-1.5, -1) {};
  \node[vertex,label=left:$v_1$] (v1a) at (-1.8, -2) {};
  \node[vertex] (v1b) at (-1.5, -2) {};
  \node[vertex] (v1c) at (-1.2, -2) {};
  \node[vertex] (u2) at (-.5, -1) {};
  \node[vertex] (v2a) at (-0.8, -2) {};
  \node[vertex] (v2b) at (-.5, -2) {};
  \node[vertex] (v2c) at (-.2, -2) {};
  \node[vertex] (u3) at (.5, -1) {};
  \node[vertex] (v3a) at (0.8, -2) {};
  \node[vertex] (v3b) at (.5, -2) {};
  \node[vertex] (v3c) at (.2, -2) {};
  \node[vertex,label=45:$u_k$] (u4) at (1.5, -1) {};
  \node[vertex] (v4a) at (1.8, -2) {};
  \node[vertex] (v4b) at (1.5, -2) {};
  \node[vertex] (v4c) at (1.2, -2) {};

   \node at (-1.5,-2.5) {$V_1$};
   \node at (1.5,-2.5) {$V_k$};
   \node at (-1.9,-1.4){$h_1$};
   \node at (.9,-.3) {$e_k$};
   \node at (-.9,-.3) {$e_1$};
   \node at (.3,.5) {$e_0$};

  \draw (s) -- (u0);
  \draw (s) -- (u1);
  \draw (s) -- (u2);
  \draw (s) -- (u3);
  \draw (s) -- (u4);
  \draw (u1) -- (v1a);
  \draw (u1) -- (v1b);
  \draw (u1) -- (v1c);
  \draw (u2) -- (v2a);
  \draw (u2) -- (v2b);
  \draw (u2) -- (v2c);
  \draw (u3) -- (v3a);
  \draw (u3) -- (v3b);
  \draw (u3) -- (v3c);
  \draw (u4) -- (v4a);
  \draw (u4) -- (v4b);
  \draw (u4) -- (v4c);
\end{tikzpicture}
 \caption{Edges and nodes in Case~B}\label{fig:eds-caseb}
\end{figure}

Now define $\theta_1=\min_{i=0}^k (w(e_i)+w(u_i)+w(s))$, 
$\theta_2=\sum_{i=1}^k \min\{w(u_i)+w(v_i)+w(h_i), \pi(e_i)\}$,
and let $\theta=\min\{\theta_1,\theta_2\}$.
We denote the index $i \in [k]$ of an edge $e_i$ that attains $\theta_1=w(e_i)+w(u_i)+w(s)$ by $i^*$, 
and specify $K=\{i \in [k] \colon w(u_i)+w(v_i)+w(h_i) \leq \pi(e_i)\}$.

We define $I'=(G',w',\pi')$ as follows.
If $\theta_1\geq \theta_2$,
then $G'$ is the tree obtained by removing all edges in 
$\bigcup_{i \in [k]}H_i$ and all nodes in $\bigcup_{i \in [k]}V_i$ from $G$, 
and $\pi'\colon E' \rightarrow \Rset_+$ is defined such
that
\[
 \pi'(e)=
 \begin{cases}
  0 & \mbox{if } e \in \{e_1,\ldots,e_k\},\\
  \pi(e) & \mbox{otherwise}
 \end{cases}
\]
for $e \in E'$.
In this case, $w'\colon V' \cup E' \rightarrow \Rset_+$ is defined by
\[
 w'(v)=
 \begin{cases}
  (w(s)-\theta)_+ & \mbox{if } v = s,\\
  w(u_i)-(\theta-w(s)-w(e_i))_+ & \mbox{if } v=u_i, i \in [k]^*\\
  w(v) & \mbox{otherwise}
 \end{cases}
\]
for $v \in V'$, and 
\[
 w'(e)=
 \begin{cases}
  (w(e_i)- (\theta-w(s))_+)_+ & \mbox{if } e = e_i,  i \in [k]^*,\\
 w(e) & \mbox{otherwise,}
 \end{cases}
\]
for $e \in E'$.
If $\theta_1 < \theta_2$, then
$e_1,\ldots,e_k$, and their descendants are removed
from $G$ to obtain $G'$, and
$\pi'$ is defined by
\[
 \pi'(e)=
 \begin{cases}
  0 & \mbox{if } e=e_0,\\
  \pi(e) & \mbox{otherwise}.
 \end{cases}
\]
Moreover, $w'$ for $E'$ and $V'$ is defined as in the case $\theta_1
\geq \theta_2$, disregarding the weights of edges and nodes removed from $G'$.

Since $G'$ has fewer nodes of depth exceeding one than $G$, 
the algorithm inductively finds
a solution $F'$ to $I'$, and a feasible solution $(\xi',\nu',\mu')$ to
$\Dual(I')$
satisfying \eqref{eq.eds-optimiality}.
$F$ is constructed from $F'$ as follows.
\[
 F =
 \begin{cases}
  F' \cup \{e_0\} & \mbox{if } \delta_{F'}(u_0)\neq \emptyset, \theta >
  w(s)+w(e_0),\\
  F'& \mbox{if } 
  \delta_{F'}(u_0)= \emptyset \mbox{ or } \theta \leq w(s)+w(e_0),
  \delta_{F'}(s)\neq \emptyset,\\
  F'\cup \{h_i \colon i \in K\} & \mbox{if } 
  \delta_{F'}(u_0)= \emptyset \mbox{ or } \theta \leq w(s)+w(e_0),
  \delta_{F'}(s)= \emptyset,
  \theta_1 \geq \theta_2,\\
  F' \cup \{e_{i^*}\} & \mbox{if } 
  \delta_{F'}(u_0)= \emptyset \mbox{ or } \theta \leq w(s)+w(e_0),
  \delta_{F'}(s)=\emptyset,
  \theta_1 < \theta_2.
 \end{cases}
\]
 We define $\xi(e_1),\ldots,\xi(e_k)$
such that
 $\xi(e_i)\leq \min\{w(u_i)+w(v_i)+w(h_i),\pi(e_i)\}$ for $i \in [k]$
 and $\sum_{i=1}^k \xi(e_i)=\theta$, which is possible because
 $\sum_{i=1}^k\min\{w(u_i)+w(v_i)+w(h_i),\pi(e_i)\} =\theta_2 \geq
 \theta$.
 We also define $\xi(e)=0$ for each $e \in \bigcup_{i=1}^k H_i$.
 The other variables in $\xi$ are set to their values in $\xi'$.
The following lemma states that this $\xi$ can form a feasible solution
 to $\Dual(I)$.

\begin{lemma}\label{lem.eds-caseb}
Suppose that 
 $\xi(e_1),\ldots,\xi(e_k)$ satisfy 
 $\xi(e_i)\leq \min\{w(u_i)+w(v_i)+w(h_i),\pi(e_i)\}$ for each $i \in [k]$
 and $\sum_{i=1}^k \xi(e_i)=\theta$. Further, suppose that
 $\xi(e)=0$ holds for each $e \in \bigcup_{i=1}^k H_i$,
 and the other variables in $\xi$ are set to their values in $\xi'$.
 Then there exist $\nu$ and $\mu$ such that
 $(\xi,\nu,\mu)$ is feasible to $\Dual(I)$.
\end{lemma}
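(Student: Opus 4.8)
The plan is to extend the feasible dual solution $(\xi',\nu',\mu')$ of $\Dual(I')$ (obtained by induction) to a feasible solution $(\xi,\nu,\mu)$ of $\Dual(I)$, leaving the $I'$-values in place everywhere except on the $\nu$- and $\mu$-variables that touch $e_1,\dots,e_k$, which I build from scratch. Two easy reductions come first. Since $\xi(h)=0$ for every $h\in\bigcup_i H_i$, constraint \eqref{eds.d3} for a covered edge $h$ reads $0\le(\text{nonnegative})$, so all $\nu(\cdot,h)$ and $\mu(\cdot,h)$ may be left at $0$. Since $\xi'(e_i)=0$ in $I'$ (because $\pi'(e_i)=0$ when $e_i\in E'$, and $e_i\notin E'$ otherwise), one may assume without loss of generality that every $I'$-dual variable attached to $e_1,\dots,e_k$ is $0$, so those variables are entirely ours. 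What remains is to satisfy, for each $i\in[k]$, constraint \eqref{eds.d3} for the covered edge $e_i$ over its covering edges $\delta(e_i)=\{e_0,\dots,e_k\}\cup H_i$, subject to the capacity bounds \eqref{eds.d1} and \eqref{eds.d2}.

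The quantitative engine is the identity $\bigl(w(s)-w'(s)\bigr)+\bigl(w(e_i)-w'(e_i)\bigr)+\bigl(w(u_i)-w'(u_i)\bigr)=\theta$ for every $i\in[k]^*$ (a short case check on the two formulas for $w'$), together with $\sum_{i=1}^k\xi(e_i)=\theta$; this is the exact analogue of the identity driving Case~A. I would run a layered greedy one level deeper than the one there. The covering edges in $H_i$ are all incident to $u_i$, so I set $\mu(u_i,e_i)=\bigl(\xi(e_i)-w(v_i)-w(h_i)\bigr)_+$ (legal since the hypothesis $\xi(e_i)\le w(u_i)+w(v_i)+w(h_i)$ forces this to be at most $w(u_i)$), and fill each remaining $H_i$-constraint with $\mu(l_h,e_i)\le w(l_h)$ and $\nu(h,e_i)\le w(h)$, which suffices because $w(v_i)+w(h_i)=\min_{u_iv\in H_i}(w(u_iv)+w(v))$ is the binding quantity. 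The covering edges $e_0,\dots,e_k$ are all incident to $s$, so $\mu(s,e_i)$ is the shared knob: increase it until the residual demand at $e_i$ is met or \eqref{eds.d2} is tight at $s$; once $s$ is saturated, increase $\nu(e_j,e_i)$ for each covering edge $e_j$ until \eqref{eds.d1} is tight at $e_j$; finally increase $\mu(u_j,e_i)$ (resp.\ $\mu(u_0,e_i)$) on whatever slack is left at $u_j$ (resp.\ $u_0$) after its own reservation.

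Feasibility of \eqref{eds.d1} and \eqref{eds.d2} is preserved because the greedy never pushes a constraint past tightness, \eqref{eds.d3} is met because every covering-edge constraint for $e_i$ is explicitly driven up to $\xi(e_i)$, and \eqref{eds.d4} is inherited ($\xi(e_i)\le\pi(e_i)$ by hypothesis, and $\xi$ is unchanged on the other demand sets). The step I expect to be the main obstacle is the global capacity accounting that shows the greedy never overruns a bound: $\mu(u_i,\cdot)$ must simultaneously serve the $H_i$-constraints of $\delta(e_i)$ and help cover the other demand sets $\delta(e_{i'})$ through the edge $e_i$, and one has to show that after reserving $\bigl(\xi(e_i)-w(v_i)-w(h_i)\bigr)_+$ at $u_i$ enough capacity remains. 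I would fix the $\mu(s,e_i)$'s up front so that $\sum_i\mu(s,e_i)=\min\{w(s),\theta\}$ with each $\mu(s,e_i)\le\xi(e_i)$---when $w(s)\ge\theta$ this makes every residual demand vanish and only the already-handled $H_i$-constraints remain, which is the easy subcase---and then verify, carefully and by covering direction, that the total residual demand $\theta-\min\{w(s),\theta\}$ fits in the combined free capacity of the triples, using $\theta\le\theta_1$ (hence $\theta-w(s)\le w(e_j)+w(u_j)$ for every $j$), the bound $\xi(e_j)\le w(u_j)+w(v_j)+w(h_j)$, and the per-triple identity above.
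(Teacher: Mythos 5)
Your overall architecture matches the paper's (keep the inherited dual, handle the $H_i$-constraints by reserving $\mu(u_i,e_i)=(\xi(e_i)-w(v_i)-w(h_i))_+$ and filling with $\mu(l_h,e_i),\nu(h,e_i)$, then fill the constraints \eqref{eds.d3} for the pairs $(e_i,e_j)$, $j\in[k]^*$, greedily against the capacities at $s$, $e_j$, $u_j$), but the step you yourself flag as ``the main obstacle'' is precisely the content of the lemma, and your sketch of how to close it does not close it. Your plan is an aggregate accounting: saturate $s$ first, note that per triple $(s,e_j,u_j)$ the guaranteed free capacities $w-w'$ sum to $\theta$, and argue the residual demand $(\theta-w(s))_+$ fits. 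This fails because the demand is not fungible inside a triple: the reservation at $u_j$ is pinned to that node (constraint \eqref{eds.d3} for demand $e_j$ and covering edge $h_j$ forces $\mu(u_j,e_j)\ge(\xi(e_j)-w(v_j)-w(h_j))_+$), while the free capacity your accounting guarantees at $u_j$ is only $(\theta-w(s)-w(e_j))_+$, and the inequality $(\xi(e_j)-w(v_j)-w(h_j))_+\le(\theta-w(s)-w(e_j))_+$ is simply not implied by the hypotheses (e.g.\ $w(s)$ large, $w(v_j)=w(h_j)=0$, $\xi(e_j)=\theta$). Moreover each pair $(i,j)$ needs its own split of $\xi(e_i)-\mu(s,e_i)$ into $\nu(e_j,e_i)+\mu(u_j,e_i)$, so summed capacity identities per triple do not by themselves produce a feasible assignment. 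The paper's proof lives exactly here, and it is order-sensitive in a way your reversed order ($s$ first, $u_j$ last) destroys: the paper increases $\mu(u_j,e_i)$ first, switches to $\nu(e_j,e_i)$ when \eqref{eds.d2} is tight at $u_j$, and falls back to $\mu(s,e_i)$ only when some whole pair $(u_j,e_j)$ is saturated; feasibility is then proved by a time argument showing \eqref{eds.d2} for $s$ cannot become tight prematurely, because the three constraints for $s$, $u_j$, $e_j$ are never charged simultaneously and the total charge is less than $\theta\le\theta_1\le w(s)+w(u_j)+w(e_j)$. Nothing in your proposal replaces that argument.

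Two smaller points. First, ``one may assume without loss of generality that every $I'$-dual variable attached to $e_1,\dots,e_k$ is $0$'' is only legitimate for the variables indexed by the \emph{demands} $e_1,\dots,e_k$ (those have $\xi'(e_i)=0$); variables such as $\nu'(e_j,e_0)$ or $\mu'(u_j,e_0)$, which spend the capacity of $e_j$ or $u_j$ to support \emph{other} demands like $e_0$, cannot be zeroed for free, since they may be needed for \eqref{eds.d3} of those demands. Consequently ``those variables are entirely ours'' overstates the available capacity at $e_j$ and $u_j$, which is exactly where your accounting is tightest. Second, your claim that fixing $\sum_i\mu(s,e_i)=\min\{w(s),\theta\}$ makes the case $w(s)\ge\theta$ trivial presupposes that the residual demands vanish \emph{and} that the reservations at the $u_j$ fit, which again is the unproved point. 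To repair the proof you would need either to adopt the paper's ordering and reprove its tightness/contradiction lemma, or to give a genuinely per-pair feasibility argument that accounts for the pinned reservation at each $u_j$.
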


 \begin{proof}
  For $i \in [k]$ and $v \in V_i$,
  we define 
  $\mu(v,e_i)$ and $\nu(u_iv,e_i)$ such that 
  $\mu(v,e_i)+\nu(u_iv,e_i)=\min\{w(v_i)+w(h_i),\xi(e_i)\}$.
  This may be achieved without violating the constraints, 
  because $w(v)+w(u_iv) \geq w(v_i)+w(h_i)$. We also
  define $\nu(u_iv,e_i)$ as $(\xi(e_i)-w(u_i)-w(h_i))_+$.
  These variables satisfy \eqref{eds.d1} for $u_iv$, \eqref{eds.d2} for $v$ and
  $u_i$, and \eqref{eds.d3} for $(e_i,u_iv)$.
  $\nu(e_j,e_i)$ for $i \in [k]$ and $j\in [k]^*$, and $\mu(v,e_i)$ for $i
  \in [k]$ and $v \in \{s\} \cup \{u_j\colon j\in [k]^*,j\neq i\}$ are set to $0$.
  The other variables in
  $\nu$ and $\mu$ are set to their values in $\nu'$ and $\mu'$.
  To advance the proof, we introduce an algorithm that increases
  $\nu(e_j,e_i)$ for $i \in [k]$ and $j\in [k]^*$, and $\mu(v,e_i)$ for $i
  \in [k]$ and $v \in \{s,u_0,\ldots,u_k\}$.
  At the completion of the algorithm, $(\xi,\nu,\mu)$ is a feasible solution to $\Dual(I)$.

 The algorithm performs $k$ iterations, and
 the $i$-th iteration increases the variables
 to satisfy \eqref{eds.d3} for
 each pair of $e_i$ and $e_j$, where $j \in [k]^*$.
 The algorithm retains a set $\Var$ of variables to be increased.
 We introduce a notion of time: Over one unit of time, the
 algorithm simultaneously increases all variables in $\Var$ 
  by one.
 The time consumed by the $i$-th iteration is $\xi(e_i)$.

At the beginning of the $i$-th iteration,
$\Var$ is initialized to $\{\mu(u_j,e_i)\colon j \in [k]^*\}$.
The algorithm updates $\Var$ during the $i$-th iteration as follows.
 \begin{itemize}
  \item At time $(\xi(e_i)-w(v_i)-w(h_i))_+$,
	$\mu(u_i,e_i)$ is added to $\Var$
	if $\Var \neq \{\mu(s,e_i)\}$;
  \item If \eqref{eds.d2} becomes tight for $u_j$ under the increase of $\mu(u_j,e_i) \in \Var$,
	then $\mu(u_j,e_i)$ is replaced by $\nu(e_j,e_i)$ for each $j
	\in [k]^*$;
  \item If $\eqref{eds.d1}$ becomes tight for $e_j$ under the increase of
	$\nu(e_j,e_i) \in \Var$ 
	with some $j \in [k]^*$,
	then $\Var$ is reset to $\{\mu(s,e_i)\}$.
\end{itemize}
 We note that the time spent between two consecutive updates may be
 zero.
  
  $\Var$ always contains a variable that appears in the right-hand side
  of \eqref{eds.d3} for $(e_i,e_j)$ with $j\in [k]^* \setminus \{i\}$,
  and for $(e_i,e_i)$ after time $(\xi(e_i)-w(v_i)-w(h_i))_+$.
  The algorithm updates $\Var$ so that 
  \eqref{eds.d1} and \eqref{eds.d2} hold for all variables except $s$. 
  Hence, to show that $(\xi,\nu,\mu)$ is a feasible solution to $\Dual(I)$,
  it suffices to show that 
  \eqref{eds.d2} for $s$ does not become tight before the algorithm is completed.

  We complete the proof by contradiction. Suppose that 
  \eqref{eds.d2} for $s$ tightens at time $\tau < \xi(e_i)$
  in the $i$-th iteration.
  Since $\Var=\{\mu(s,e_i)\}$ at this moment,
  there exists $j \in [k]^*$ such that
  \eqref{eds.d1} for $e_j$ and \eqref{eds.d2} for $u_j$ are tight.
 The variables in the left-hand sides 
  of \eqref{eds.d1} for $e_j$ and \eqref{eds.d2} for $u_j$ and $s$
  are not simultaneously increased. 
  Nor are these variables increased over time
  $(\xi(e_j)-w(v_j)-w(h_j))_+$ in the $j$-th iteration,
  and $\mu(u_j,e_j)$ is initialized to $(\xi(e_j)-w(v_j)-w(h_j))_+$.
  From this argument, it follows that $w(s)+w(u_j)+w(e_j) <
  \sum_{i'=1}^k \xi(e_{i'}) \leq \theta$.
  However, this result is contradicted by the definition of $\theta$, which implies that $\theta \leq
  \theta_1\leq w(s)+w(u_j)+w(e_j)$. Thus, the claim is proven.
\end{proof}

\begin{lemma}\label{lem.eds-caseb-optimality}
 $F$ and $\xi$ satisfy \eqref{eq.eds-optimiality}.
\end{lemma}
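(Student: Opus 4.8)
The plan is to track how the objective value changes from $F'$ to $F$ and compare it against the change $\sum_{e\in E}\xi(e) - \sum_{e\in E'}\xi(e) = \theta$ in the dual objective, using the definitions of $w'$, $\pi'$, and $F$ in the four cases. By the inductive hypothesis, $F'$ and $\xi'$ satisfy \eqref{eq.eds-optimiality} for $I'$, so it suffices to show that the increase in $w(F)+w(V(F))+\pi(\tilde F)$ over $w'(F')+w'(V'(F'))+\pi'(\tilde{F'})$ is at most $\theta$. First I would record the ``weight gap'': for $i\in[k]^*$ one has $w(e_i)+w(u_i)+w(s) - w'(e_i)-w'(u_i)-w'(s) = \theta$ (for $i=i^*$ this is by direct expansion of the $(\cdot)_+$ definitions, and for general $i$ it needs the fact that $w(e_i)+w(u_i) \ge \theta$ is not assumed—so more care is needed; the monotone structure of the three nested $(\cdot)_+$ terms makes the telescoping work and the gap is always exactly $\min\{\theta, w(e_i)+w(u_i)+w(s)\}$, which equals $\theta$ when $i=i^*$). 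I would also note $w(u_i)+w(v_i)+w(h_i) - w'(u_i) - 0 - 0$-type identities and, in the case $\theta_1<\theta_2$ where $e_1,\dots,e_k$ and descendants are deleted, that removing them from consideration only drops nonnegative weights, while $\pi'(e_0)=0$ absorbs $\pi(e_0)$.

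Next I would go through the four branches of the definition of $F$. Without loss of generality assume $|\delta_{F'}(s)|\le 1$ (remove edges $e_i$ from $F'$ as in Case~A until this holds), so that adding $\{h_i : i\in K\}$ or $\{e_{i^*}\}$ or $\{e_0\}$ to $F'$ incurs a controlled extra cost. In branch~2 ($F=F'$) nothing is added, and one checks the objective does not increase: either $\delta_{F'}(u_0)=\emptyset$, so $w'(u_0)$ was not charged and $\pi'(e_0)=0$ did not hide a penalty that must now be paid, or $\theta\le w(s)+w(e_0)$, in which case $w'(s)$ and $w'(e_0)$ already account for the relevant part of $w(s)+w(e_0)$; either way the change is $\le \theta$ with $\xi(e_i)$ still summing to $\theta$ on the new edges $e_i$. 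In branch~3 we pay, for each $i\in K$, the edge $h_i$ and node $v_i$ newly incident—cost $w(h_i)+w(v_i)$—plus $w(u_i)$ if $u_i$ was not already incident in $F'$ (it is, since $\delta_{F'}(s)=\emptyset$ forces, hmm—actually $\delta_{F'}(s)=\emptyset$ means $u_i$ may or may not be incident; but the demand edges $\delta(e_i)$ must still be dominated, which $h_i$ does). The key accounting is that the total cost of $\{h_i:i\in K\}$ plus the saved penalties $\sum_{i\in K}\pi(e_i)$ is bounded by $\theta_2 \ge \theta$ via the definition of $K$ and $\theta_2$. In branch~4 we add $e_{i^*}$ at cost $\le w(e_{i^*})+w(u_{i^*})+w(s) = \theta_1$, and since $\theta_1<\theta_2$ we get $\theta=\theta_1$; the deleted subtree of $e_i$'s contributes no penalty because $e_i$'s are dominated by $e_{i^*}$ (they are all incident to $s$), so $\tilde F$ does not grow there. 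Branch~1 is the $e_0$ case handled exactly as in Case~A: $\delta_{F'}(v_0)\ne\emptyset$ so $w'(u_0)$ was charged, and $\theta>w(s)+w(e_0)$ forces $w'(s)=w'(e_0)=0$, so the increase is $w(u_0)-w'(u_0)+w(s)+w(e_0)$, wait—$u_0$ was already incident via $\delta_{F'}(u_0)$, so only $w(s)+w(e_0)$ is new plus $\theta-(w(s)+w(e_0))$ worth of $w(s),w(e_0)$ hidden in $w'$, totalling $\theta$.

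The main obstacle I anticipate is branch~3 (the $\{h_i:i\in K\}$ case): here several edges are added at once, each potentially bringing a new node $u_i$ into $V(F)$, and one must verify (a) that every demand edge $\delta(e_i)$ for $i\in[k]$ is indeed dominated—those with $i\in K$ by $h_i$, those with $i\notin K$ by $e_{i^*}$? no—so one must check the feasibility bookkeeping that $e_i\in\tilde F$ exactly when no $h_j$, $e_{i^*}$, or $e_0$ dominates it, and that the penalty $\pi(e_i)$ for such $i$ is already accounted for by the term $\min\{w(u_i)+w(v_i)+w(h_i),\pi(e_i)\}=\pi(e_i)$ in $\theta_2$; and (b) that the extra weight $\sum_{i\in K}(w(h_i)+w(v_i)+[u_i\text{ new}]w(u_i))$ minus the penalties saved is at most $\theta_2-\theta_2 = 0$ beyond $\theta$, i.e. $\sum_{i\in K}(w(h_i)+w(v_i)+w(u_i)) + \sum_{i\notin K}\pi(e_i) \le \theta_2 + (\text{already-paid }w'(u_i))$. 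Disentangling which $w'(u_i)$ were already charged in $F'$ versus newly charged, and matching this against $\theta$ rather than $\theta_2$, is where the care is needed. I would handle it by first establishing the clean identity for the weight gap at $s$ and each $u_i$, then arguing case by case on whether $\delta_{F'}(s)=\emptyset$ or not, mirroring the structure of the Case~A proof but with the extra layer coming from the $H_i$'s and the set $K$.
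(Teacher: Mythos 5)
Your plan is the same primal--dual accounting the paper uses: show the dual objective grows by exactly $\theta$ (using $\xi'(e_i)=0$ or $e_i\notin E'$), then bound the growth of the primal objective from $(F',I')$ to $(F,I)$ by $\theta$, branching on the four cases in the definition of $F$ after a WLOG that at most one edge of $F'$ is adjacent to $e_0$; your treatment of the first branch ($F=F'\cup\{e_0\}$, increase $=w(u_0)-w'(u_0)+w(s)+w(e_0)=\theta$) is exactly the paper's computation, and the paper dismisses the remaining branches as immediate.

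The one place where you stop short --- the $\{h_i:i\in K\}$ branch, which you flag as the main obstacle --- does close, and with less case analysis than you fear. First, your hedge about the weight-gap identity is unnecessary: since $\theta\le\theta_1\le w(e_i)+w(u_i)+w(s)$ for \emph{every} $i\in[k]^*$, the three nested $(\cdot)_+$ terms telescope to give $w(e_i)+w(u_i)+w(s)-w'(e_i)-w'(u_i)-w'(s)=\theta$ for all $i$, not just $i^*$. Second, in that branch the condition $\delta_{F'}(s)=\emptyset$ settles your ``is $u_i$ newly charged?'' hesitation: the only edges of $G'$ incident to $u_i$ are $e_i$ (excluded since it meets $s$) and nothing else (the $H_i$ were deleted from $G'$), so $u_i,v_i\notin V(F')$ and the full new weight is $\sum_{i\in K}\bigl(w(h_i)+w(v_i)+w(u_i)\bigr)$; moreover $w(u_0)=w'(u_0)$ when $\delta_{F'}(u_0)\neq\emptyset$ because then $\theta\le w(s)+w(e_0)$. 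Third, your worry (a) about domination: the demands $\delta(e_i)$ with $i\notin K$ are simply \emph{not} covered and pay $\pi(e_i)$, the demands with $i\in K$ are covered by $h_i$, the deleted leaf edges in $\bigcup_i H_i$ carry zero penalty by the Case~B assumption, and every other demand dominated by $F'$ in $G'$ is still dominated by $F\supseteq F'$ in $G$. Hence the total increase is exactly $\sum_{i\in K}\bigl(w(h_i)+w(v_i)+w(u_i)\bigr)+\sum_{i\notin K}\pi(e_i)=\theta_2=\theta$, with no leftover ``$\theta_2$ versus $\theta$'' slack to disentangle. With these three observations inserted, your branch-by-branch argument is complete and coincides in substance with the paper's proof (which only writes out the $e_0$ branch explicitly).
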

\begin{proof}
For each $i \in [k]$,
either $e_i \not\in E'$ holds, or $\xi'(e_i)=0$ holds (because 
$\pi'(e_i)=0$).
Hence, $\sum_{e \in E}\xi(e)=\sum_{i=1}^k \xi(e_i)+\sum_{e \in
 E'}\xi'(e)=\theta+\sum_{e \in E'}\xi'(e)$.
 Therefore, it suffices to prove that $\sum_{e \in F}w(e) \leq \theta+\sum_{e
 \in F'}w'(e)$.

 Without loss of generality, we can assume $|\delta_{F'}(e_0)|\leq 1$
 (if false, we can remove edges $e_i$, $i\in [k]^*$ from $F'$ until $|\delta_{F'}(e_0)|= 1$).
 In the sequel, we discuss only the case of
 $\delta_{F'}(u_0)\neq \emptyset$ and $\theta > w(s)+w(e_0)$. In the alternative case, the claim immediately follows from the definitions of $F$ and $w'$.
$\delta_{F'}(u_0)\neq  \emptyset$ implies that $w'(u_0)$ is counted in 
 the objective value of $F'$.
 Moreover, $w'(s)=w'(e_0)=0$ follows from $\theta>w(s)+w(e_0)$.
 Thus, the objective values increase from $F'$ to $F$
 by $w(u_0)-w'(u_0) + w(e_0)+w(s)$, which equals $\theta$.
\end{proof}

\begin{proof}[Proof of Theorem~\ref{thm.eds}]
 We have proven that our algorithm always finds a solution $F$ to $I$
 and a feasible solution $(\xi,\nu,\mu)$ to $\Dual(I)$, where both solutions
satisfy \eqref{eq.eds-optimiality}.
By the duality of LPs, the right-hand side of \eqref{eq.eds-optimiality} cannot exceed the optimal value of $I$; thus, $F$ is an optimal solution to $I$.
\end{proof}

\section{Multicut problem in trees}
\label{sec.multicut}

In this section, we prove Theorem~\ref{thm.multicut}.
Again the input tree $G$ is rooted by selecting an
arbitrary root node.
For each $i \in [k]$,
we let $P_i$ denote the path connecting $s_i$ and $t_i$,
and $\lca_i$ denote the maximum-depth common ancestor of $s_i$ and $t_i$.
The paths $P_1,\ldots,P_k$ are called \emph{demand paths}.
We also denote the set of edges in $P_i$ by $E_i$, and the set of nodes
in $P_i$ by $V_i$ for notational convenience.
We say that an edge $e$ \emph{covers} a demand path $P_i$
if $e \in E_i$. The multicut problem in $G$ seeks a minimum
weight set of edges that covers all demand paths.

The prize-collecting multicut problem can be reduced to 
the multicut problem as follows.
For each $i \in [k]$, add new nodes $s'_i,s''_i$ and new edges $s_is'_i,s'_i s''_i$ to $G$,
and replace the $i$-th demand pair by $(s''_i,t_i)$.
Those new nodes and edges are weighted by
$w(s'_is''_i)=\pi(i)$, $w(s_is'_i)=+\infty$,
and $w(s'_i)=w(s''_i)=0$.
Choosing $s'_is''_i$ into a solution to this new instance of the multicut problem corresponds to violating the $i$-th demand in the original instance
of the prize-collecting multicut problem.

Due to this reduction, we consider only
the multicut problem in trees, which is equivalent to assuming that
 $\pi(i)=+\infty$ for all $i \in [k]$.
 For an instance $I=(G,w)$ of the multicut problem, the dual of the LP
 relaxation $\LP(I)$ is given by
  \begin{align}
  &{\TreeDual}(I)=\hspace*{-2em} \notag\\
  &\mbox{maximize} && 
 \sum_{i \in [k]} \xi(i)\notag \\
  &\mbox{subject to} &&
  \xi(i) \leq \nu(e,i)+\mu(u_e,i)+\mu(l_e,i) && \mbox{for } i\in [k], e \in E_i,\label{treedual1}\\
   &&& \sum_{i\in [k]: e \in E_i} \nu(e,i) \leq w(e) && \mbox{for } e \in E,\label{treedual2}\\
   &&& \sum_{i \in [k]: v \in V_i} \mu(v,i) \leq w(v)
   && \mbox{for } v \in V,\label{treedual3}\\
   &&&\xi(i)\geq 0  && \mbox{for } i \in [k],\notag\\
   &&&\nu(e,i) \geq 0 && \mbox{for } i \in [k], e \in E_i,\notag\\
   &&&\mu(v,i) \geq 0&& \mbox{for } i \in [k], v \in V_i.\notag
 \end{align}

Our algorithm initializes the solution set $F$ to an empty set,
and the dual solution $(\xi,\nu,\mu)$ to 0.
The algorithm proceeds in two phases; the increase phase and deletion phase.
The algorithm iterates in the increase phase, selecting edges 
covering demand paths not previously covered by $F$ 
and adding them to $F$
while updating
the dual solution.
The increase phase terminates when all demand paths have been covered by $F$.
In the deletion phase, $F$ is converted into a minimal solution by
removing some edges.

The demand pairs are assumed to be sorted in the decreasing order
of depth of $\lca_i$, implying that $\lca_i$ is not a descendant of $\lca_j$
if $j < i$.

\subsection*{Increase phase}

At the beginning of each iteration in the increase phase, the algorithm
selects the minimum index $i$ for which $P_i$ is not covered by the
current solution $F$. It then 
updates the dual solution $(\xi,\nu,\mu)$, and 
adds several edges to $F$, one of which covers $P_i$.
If all demand paths are covered by $F$ after this operation, the increase phase is terminated and the algorithm proceeds to the deletion
phase; otherwise, it begins the next iteration. 
The iteration that processes $P_i$ is called \emph{the iteration for $i$}.
In the following discussion, we explain 
the update process of the dual solution, and how edges are selected for addition to
$F$ in the iteration for $i$.

First, we define some terminologies.
 We say that $e \in E_i$ is \emph{tight} with regard to $i\in [k]$
 if \eqref{treedual1} becomes an equality for $(i,e)$.
 We say that $e$ is a \emph{bottleneck edge} with regard to $i$
 if it is tight with regard to $i$, if \eqref{treedual2} becomes an
 equality for $e$, and if \eqref{treedual3} becomes an equality for both
 end nodes of $e$.

At the beginning of the iteration,
$\nu$ is assumed to be minimal under the condition that
$(\xi,\nu,\mu)$ is a feasible solution to $\Dual(I)$.
This condition can be assumed without loss of generality because arbitrarily 
decreasing $\nu$ makes it minimal.
By this assumption, if $\nu(e,j)>0$ for some $j \in [i-1]$ and $e \in E_j$, then 
$e$ is tight with regard to $j$.

The algorithm attempts to continuously increase $\xi(i)$.
As in Section~\ref{sec.eds}, we introduce a time interval, during which
$\xi(i)$ increases by one.
To satisfy \eqref{treedual1},
$\nu(e,i)$, $\mu(u_e,i)$ or $\mu(l_e,i)$ 
must be increased at the speed of $\xi(i)$ for each edge $e \in E_i$ that is tight
with regard to $i$.
If no bottleneck edge exists with regard to $i$, 
the algorithm retains an edge set
$H \subseteq \{e \in E_i\colon \sum_{j \in [k]:e \in E_j}\nu(e,j) < w(e)\}$
and a node set
$U \subseteq \{v \in V_i\colon \sum_{j \in [k]:v \in V_j}\mu(v,j) < w(v)\}$
such that each tight edge in $E_i$ is included in $H$
or is incident to a node in $U$.
The algorithm increases
$\nu(e,i)$, $e \in H$, and $\mu(v,i)$, $v \in U$
at the same speed as $\xi(i)$.
We note that $H$ and $U$ are computed
greedily so that they are minimal.

We now explain how the algorithm handles a bottleneck edge $e$.
For an end node $v$ of the bottleneck edge $e$, we define $J(i,v)$ as $\{j \in [i-1] \colon \mu(v,j) > 0\}$. 
The algorithm attempts to decrease $\mu(v,j)$, defined at an end node $v$ of $e$
and $j \in J(i,v)$.
Below we detail how $\mu(v,j)$ is decreased
while retaining the feasibility of $(\xi,\mu,\nu)$.
We note that decrease of $\mu(v,j)$ is not always possible.
We call $v$ \emph{relaxable} (with regard to $i$)
if $\mu(v,j)$ can be decreased for some $j \in J(i,v)$.
If $E_i$ contains bottleneck edges,
the algorithm maintains
\begin{itemize}
 \item a set $R$ of relaxable nodes such that each bottleneck
         edge $e \in E_i$ is incident to at least one node in $R$,
\item  an edge set $H \subseteq \{e \in E_i\colon \sum_{j \in [k]:e \in
         E_j}\nu(e,j) < w(e)\}$,
 \item and a node set
         $U \subseteq \{v \in V_i\colon \sum_{j \in [k]:v \in V_j}\mu(v,j) < w(v)\}$.
\end{itemize}
$R$, $H$, and $U$ are minimal under the condition that
each tight edge is included in $H$, or is incident to a node in $R\cup U$.
The algorithm increases $\xi(i)$,
$\nu(e,i)$ for $e \in H$, 
and $\mu(v,i)$ for $v \in U \cup R$ at the same speed, 
where 
increasing
$\mu(v,i)$ for $v \in R$
involves 
decreasing $\mu(v,j)$ for some $j \in J(i,v)$ and
updating other variables, as explained below.

We now explain how $\mu(v,j)$ is decreased for some $j \in J(i,v)$, and
formally define the relaxability of $v$.
$\mu(v,j)>0$ implies that $E_j$ contains one or two edges incident to
$v$. Suppose that $E_j$ contains a single edge, $f$. 
Let $u$ be the other end node of $f$.
If $f$ is not tight with regard to $j$, then $\mu(v,j)$ is decreased until $f$ becomes
tight.
Even if $f$ is tight with regard to $j$, 
$\nu(f,j)$ or $\mu(u,j)$ 
is increased while $\mu(v,j)$ is decreased
at the same speed, provided that
$f$ is not a bottleneck edge with regard to $j$.
This action retains the feasibility because 
\eqref{treedual2} for $f$ or \eqref{treedual3} for $u$ is not tight
unless $f$ is a bottleneck edge with regard to $j$.
If $f$ is a bottleneck edge with regard to $j$,
the algorithm recursively attempts to decrease $\mu(u,j')$ for some $j' \in
J(j,u)$ if $u$ is relaxable,
and increase $\mu(u,j)$.
Under these update rules, $\mu(v,j)$ is decreased without violating the feasibility of
$(\xi,\mu,\nu)$.
If $E_j$ contains two edges $f$ and $f'$ incident to $v$,
$\mu(v,j)$ decreases only when allowed for both $f$ and $f'$.
We define $v$ as relaxable if one of these updates is possible.
$v$ is not relaxable under the following conditions.

\begin{fact}\label{fact.relaxable}
 A node $v \in V_i$ is not relaxable with regard to $i$ if and only if
 $J(i,v)=\emptyset$, or for each $j \in J(i,v)$, $E_j$ contains a bottleneck edge $f$ 
whose other end node is non-relaxable and which is incident to $v$.
\end{fact}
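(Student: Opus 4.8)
The plan is to prove the equivalence by unrolling the recursive definition of ``relaxable'' given just above, treating it as an induction on the index $i$ (equivalently, on the depth of the recursion). The first point to establish is that this recursion is well-founded: to decide whether $\mu(v,j)$ can be decreased for some $j \in J(i,v) \subseteq [i-1]$ one only ever appeals to the relaxability of another node with respect to an index $j' \in J(j,\cdot) \subseteq [j-1]$, so the governing index strictly decreases along any chain of recursive calls and the process bottoms out after at most $k$ steps. Hence ``relaxable with regard to $i$'' is a well-defined property, and I may assume the claimed characterization for every index strictly smaller than $i$. I would also record the invariant that whenever the procedure attempts to decrease $\mu(v,j)$, constraint~\eqref{treedual3} is already tight at $v$; this holds because the recursion enters a node only through a bottleneck edge, whose two end nodes have tight \eqref{treedual3} by definition.

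Next I would fix one $j \in J(i,v)$ and determine exactly when the update associated with $j$ succeeds. Since $\mu(v,j)>0$ forces $v \in V_j$, the set $E_j$ has either one or two edges incident to $v$. If it has a single such edge $f$ with other end node $u$: when $f$ is not tight with regard to $j$ a small decrease of $\mu(v,j)$ is immediately feasible; when $f$ is tight but not a bottleneck edge with regard to $j$, either \eqref{treedual2} has slack at $f$ or \eqref{treedual3} has slack at $u$ (using the invariant that \eqref{treedual3} is tight at $v$, so the failure of the bottleneck condition must come from $f$ or $u$), so $\nu(f,j)$ or $\mu(u,j)$ can absorb the decrease; and when $f$ is a bottleneck edge with regard to $j$, preserving \eqref{treedual1} for $(j,f)$ while decreasing $\mu(v,j)$ requires raising $\mu(u,j)$, which is possible precisely when $u$ is relaxable with regard to $j$. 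Thus $\mu(v,j)$ cannot be decreased if and only if $f$ is a bottleneck edge with regard to $j$ whose other end node is non-relaxable with regard to $j$. If $E_j$ has two incident edges $f$ and $f'$, the definition allows the decrease only when it is allowed for both, so $\mu(v,j)$ cannot be decreased if and only if at least one of $f,f'$ is a bottleneck edge with regard to $j$ whose other end node is non-relaxable with regard to $j$. In either case: $\mu(v,j)$ cannot be decreased if and only if $E_j$ contains a bottleneck edge $f$ incident to $v$ whose other end node is non-relaxable with regard to $j$.

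Finally I would assemble the statement. By definition $v$ is relaxable with regard to $i$ iff $\mu(v,j)$ can be decreased for some $j \in J(i,v)$; negating, $v$ is non-relaxable with regard to $i$ iff $J(i,v)=\emptyset$, or $\mu(v,j)$ cannot be decreased for every $j \in J(i,v)$. Plugging in the per-$j$ equivalence from the previous paragraph yields exactly Fact~\ref{fact.relaxable}.

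I expect the main obstacle to be the careful bookkeeping around the recursive sub-procedure rather than any new idea: making precise that ``$\mu(v,j)$ can be decreased'' means an infinitesimal, one-sided feasible perturbation; verifying that raising $\mu(u,j)$ together with a compensating decrease of some $\mu(u,j')$ leaves every other dual constraint valid; and maintaining the tightness invariant so that the assertion ``\eqref{treedual2} for $f$ or \eqref{treedual3} for $u$ is not tight unless $f$ is a bottleneck edge with regard to $j$'' is actually applicable at the point where it is used. Once these are pinned down, the equivalence is a routine case check with no combinatorial content beyond unfolding the definition one level.
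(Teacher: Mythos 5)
Your proposal is correct and takes essentially the same route as the paper, which states Fact~\ref{fact.relaxable} without a separate proof, treating it as the direct negation of the recursive definition of relaxability given just before it; your unrolling of that definition (well-foundedness via the strictly decreasing indices along the recursion, the per-$j$ case analysis for one versus two incident edges of $E_j$, and the tightness invariant at $v$ needed to justify the ``not tight unless bottleneck'' step) supplies exactly the bookkeeping the paper leaves implicit. No gaps.
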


We note that decreasing $\mu(v,j)$ for
some relaxable node $v \in V_i$ and $j \in J(i,v)$ may cause other variables to increase.
In this case,
if $P_j$ shares nodes or edges with $P_i$,
increasing $\xi(i)$ by $\epsilon >0$ may increase 
the left-hand sides of \eqref{treedual2} and \eqref{treedual3}
by more than $\epsilon$.
Hence, $\epsilon$ must be set sufficiently small that
the feasibility of $(\xi,\mu,\nu)$ is maintained.
In implementing the increase phase, we 
recommend solving an LP for deciding the increment of $\xi(i)$ in
a single step.
The maximum increment 
$\epsilon$ for $\xi(i)$ can be computed by formulating the problem as an LP.

When $\xi(i)$ increases no further, the algorithm adds several edges to
$F$. At this moment, $E_i$ includes a bottleneck edge $e$ 
such that $e$ is tight with regard to all $j \in [k]$, and 
neither of its end nodes are relaxable.
If two or more such edges exist, the edge of maximum depth, denoted $e$, is added to $F$.
We call $e$ the {\em witness} of $P_i$.

The algorithm then completes the following operations for each end node 
$v$ of $e$.
By Fact~\ref{fact.relaxable}, 
$E_j$ contains a bottleneck edge $f$ incident to $v$ for each $j \in
J(i,v)$, where $f=e$ possibly holds.
The algorithm adds such $f$ to $F$ for each $j \in J(i,v)$ with $e \not\in E_j$.
Since the other end node $v'$ of $f$ is non-relaxable,
$E_{j'}$ also contains a bottleneck edge $f'$ incident to $v'$
for each $j' \in J(j,v')$.
If $f$ is added to $F$ and $f \not\in E_{j'}$,
the algorithm adds each of such $f'$ to $F$ and repeats the process
for the other end nodes of $f'$.

\begin{lemma}\label{lem.increasing}
Let $e \in F$. At the completion of the increase phase,
$e$ is a bottleneck edge with regard to each $i \in [k]$.
 Moreover, neither end node of $e$ is relaxable with regard to each $i \in [k]$.
\end{lemma}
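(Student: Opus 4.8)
The plan is to prove, by induction on the iterations of the increase phase, a strengthened invariant: at the end of every iteration, every $e\in F$ satisfies (a) \eqref{treedual2} is tight for $e$ and \eqref{treedual3} is tight for both endpoints of $e$; (b) $e$ is tight with regard to every already-processed demand $j$ with $e\in E_j$; (c) for each endpoint $v$ of $e$ and each already-processed demand $j$ with $v\in V_j$, the node $v$ is non-relaxable with regard to $j$; and, as an auxiliary bookkeeping clause that will make (c) self-propagating, for each endpoint $v$ of $e$ and each processed $j$ with $\mu(v,j)>0$ some bottleneck edge with regard to $j$ that lies on $P_j$ and is incident to $v$ already belongs to $F$. (Here and in the lemma the quantifier over demands is read as ranging over those whose path contains $e$, resp.\ over those whose path contains the endpoint.) Since the dual of a demand the algorithm never processes stays identically zero — no update ever touches $\xi(i)$, $\nu(\cdot,i)$, or $\mu(\cdot,i)$ for an unprocessed $i$ — (a)--(c) hold trivially for such demands as well, so at termination of the increase phase the invariant is exactly the statement of the lemma.

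Two monotonicity observations do most of the work. First, non-relaxability is monotone in the demand index: by Fact~\ref{fact.relaxable}, non-relaxability of $v$ with regard to $i$ is decided by a condition quantified over $J(i,v)=\{j<i:\mu(v,j)>0\}$ whose inner clauses mention only demands in $J(i,v)$, and since $J(j,v)\subseteq J(i,v)$ whenever $j\le i$, non-relaxability with regard to $i$ propagates down to every $j\le i$ with $v\in V_j$. Second, once $e$ has become a bottleneck edge with non-relaxable endpoints, the dual variables attached to $e$ freeze: \eqref{treedual2} being tight for $e$ forbids any increase of any $\nu(e,\cdot)$; the $\nu$-minimality normalization (which forces $\nu(e,j)>0$ to imply that $e$ is tight with regard to $j$) together with \eqref{treedual3} being tight at both endpoints forbids any decrease of any $\nu(e,\cdot)$; and, as long as both endpoints stay non-relaxable, \eqref{treedual3} being tight at an endpoint $v$ keeps $v$ out of the algorithm's sets $U$ and $R$ and blocks every step of the relaxation subroutine — including its recursive increases of $\mu(u,j)$ — that would raise $\mu(v,\cdot)$. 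Because ``$e$ stays a bottleneck edge'' and ``the endpoints of $e$ stay non-relaxable'' depend on one another, both must be carried in the same induction.

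For the inductive step in the iteration for $i$: an edge $e$ already in $F$ lies on no demand path $P_j$ with $j$ processed in this iteration (it would already cover $P_j$), so $\xi$ does not change on the demands relevant to $e$, the freezing observation keeps (a) and (b) intact, and index-monotonicity applied to the iteration in which $e$ entered $F$ yields (c) for the newly processed demand $i$ as well. A newly added edge is either the witness $e$ of $P_i$ — which by construction is a bottleneck edge with regard to all of $[k]$ with both endpoints non-relaxable with regard to $i$, hence, by index-monotonicity, with regard to every $j\le i$ — or a chain edge inserted for some endpoint $v$ and $j\in J(i,v)$: Fact~\ref{fact.relaxable} and the insertion rule make it a bottleneck edge with regard to $j$ whose far endpoint is non-relaxable with regard to $j$, its near endpoint inherits non-relaxability from $e$ (or the preceding chain edge) via index-monotonicity, and one walks down the chain; the simultaneous insertion of all these edges is precisely what re-establishes the bookkeeping clause. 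I expect the main obstacle to be exactly this last point pushed to full strength — verifying the bookkeeping clause and the non-increase of $\mu$ at endpoints of $F$ — which forces one to trace the relaxation subroutine carefully (its recursion through the sets $J(j,u)$, and its coupled increases of $\nu(f,j)$ and $\mu(u,j)$) and show that every move it makes is blocked exactly at the edges and nodes protected by the invariant; this is what then guarantees that a newly inserted edge is a bottleneck edge with regard to \emph{every} demand containing it and not merely the demand that triggered its insertion, and it is the reason the invariant has to be this elaborate and the induction simultaneous rather than a sequence of independent claims.
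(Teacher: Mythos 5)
Your proposal is essentially the paper's own argument: the conditions hold at the moment an edge is inserted into $F$ (by the construction of the increase phase), and they persist because later iterations never decrease $\nu(e',\cdot)$ on a bottleneck edge $e'$ nor $\mu(v,\cdot)$ at its end nodes, while the tight constraints \eqref{treedual2} and \eqref{treedual3} block any increase. Your strengthened invariant, the index-monotonicity of non-relaxability via $J(j,v)\subseteq J(i,v)$, and the ``freezing'' observation are just a more explicit rendering of the paper's two-sentence preservation argument (the paper likewise asserts, rather than re-derives, the insertion-time conditions from the description of the increase phase).
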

\begin{proof}
 When $e$ is added to $F$, it satisfies the above conditions.
 In later iterations, the algorithm does not decrease $\nu(e',i)$ for
 any bottleneck edge $e'$ nor $\mu(v,i)$ for an end node $v$ of $e'$.
 Hence, $e$ satisfies the conditions at completion of the increase phase.
\end{proof}

\subsection*{Deletion phase}
Let $I$ be the set of indices $i$ for which $P_i$ was considered in the
increase phase. In other words, 
$i$ was the minimum index for which $P_i$ was covered by no edge in $F$
at the beginning of some iteration of the increase phase.
The deletion phase is also iterated, sequentially processing $P_{i}$ in
the decreasing order of $i \in I$. 
As defined in the increase phase,
the iteration that considers $P_i$ is called \emph{the iteration for $i$}.
Briefly, the deletion phase selects
edges from $F$ to obtain a final solution $F'$ 
in which each $\xi(i)$, $i \in I$ contributes to at most two edges.

$F'$ is initialized to the empty set. Suppose that the algorithm iterates $i \in I$. 
We denote $\{v \in V_i \colon \mu(v,i)>0, \delta_{F'}(v)\neq
\emptyset\}$ by $\tilde{V}_i$.
Let $v$ be a node in $\tilde{V}_i$ with no ancestor in $\tilde{V}_i$.
By Lemma~\ref{lem.increasing},
 $\delta_{F}(v)\supseteq \delta_{F'}(v) \neq \emptyset$
implies that $v$ is non-relaxable with regard to $i$.
Suppose now that an edge in $\delta_{F'}(v)$ is added to $F'$ in the
iteration for $i'$ with $i' > i$.
$\mu(v,i) > 0$ implies that
$i \in J(i',v)$, and hence by Fact~\ref{fact.relaxable},
$E_i$ contains a bottleneck edge
$f$ incident to $v$, whose other end node is 
non-relaxable.
If $E_i$ contains two such edges, and if $v$ is the lower end node of
one of those edges, 
$f$ becomes the edge satisfying $l_f=v$.
Otherwise, $f$ is arbitrarily selected from the candidate edges.
If $f$ has not previously belonged to $F'$, it is added to $F'$.
If $F'$ contains an edge $f' \in E_i$ which is a descendant of $f$, 
$f'$ is deleted from $F'$. 

 The above $v$ can be selected from at most two choices in $\tilde{V}_i$.
The algorithm completes the above operation for each of these nodes.
At the end of the operation, if
$F'\cap E_i$ does not contain
the witness $e_i$ of $P_i$ or
any ancestor of $e_i$,
the algorithm adds $e_i$ to $F'$.
The algorithm performs no further tasks in the iteration for $i$.
Note that $1 \in I$ always holds.
If $i > 1$, the algorithm proceeds to the next iteration. 
If $i=1$, the deletion phase terminates, and the algorithm outputs $F'$.

\begin{lemma}\label{lem.multicut-key}
Suppose that the algorithm outputs an edge set $F'$. Then it satisfies the following conditions.
\begin{itemize}
 \item[\rm (i)]  $F' \cap E_j\neq \emptyset$ for each $j \in [k]$.
 \item[\rm (ii)] Let $i \in I$. Each subpath between an end node of
	        $P_i$ and $\lca_i$ includes at most one edge in $F'$.
\item[\rm (iii)] Let $i,i' \in I$ with $i \neq i'$.
	       If an edge in $F' \cap E_{i'}$ is incident to a
	       node $v$ in $V_{i}$, and if $\mu(v,i)>0$,
	       then $F' \cap E_{i}$ contains an edge incident to $v$.
\end{itemize}
\end{lemma}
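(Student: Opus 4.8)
The plan is to prove the three assertions simultaneously by induction on the iterations of the deletion phase, carried out in decreasing order of $i \in I$. Two auxiliary invariants will be threaded through the induction. The first is that $F' \subseteq F$ at every moment of the deletion phase: the edges ever inserted into $F'$ are the witnesses, which lie in $F$ by construction, and bottleneck edges of some $E_i$ incident to a topmost node $v$ of $\tilde{V}_i$; for the latter one argues that the edge of $\delta_{F'}(v)$ that triggered the insertion already lies in $F$ (by the induction), so by Lemma~\ref{lem.increasing} $v$ is non-relaxable with regard to the index $i' > i$ in whose iteration that edge was inserted, and then Fact~\ref{fact.relaxable} yields the required bottleneck edge of $E_i$ at $v$, which is exactly an edge the increase phase placed into $F$ while processing $P_{i'}$. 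The second invariant is that once the iteration for $j \in I$ has finished, $F' \cap E_j$ contains the witness $e_j$ or an ancestor of $e_j$ lying on $P_j$. Because $F' \subseteq F$, Lemma~\ref{lem.increasing} applies to every edge of $F'$: it is a bottleneck edge with regard to every index and both its end nodes are non-relaxable. This last fact is the main handle for controlling the dual variables at end nodes of $F'$-edges.

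For assertion (i), I would attach to each $j \in [k]$ a distinguished edge of $F' \cap E_j$ and show that it is never lost. For $j \in I$ this is the edge supplied by the second invariant after the iteration for $j$; for $j \notin I$ it is an edge of $F \cap E_j$ that covered $P_j$ during the increase phase and is retained by the deletion phase. An edge of $F' \cap E_j$ leaves $F'$ only when the iteration for some $i < j$ inserts an ancestor of it lying in $E_i$; I would check that such an ancestor lies in $E_j$ as well --- its upper end node cannot rise strictly above $\lca_j$ because it is an ancestor of an edge of $E_j$ --- so the distinguished edge of $F' \cap E_j$ is only ever replaced by an ancestor of it within $E_j$, and $F' \cap E_j$ therefore stays nonempty.

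For assertions (ii) and (iii), fix $i \in I$ and one of the at most two subpaths $Q$ of $P_i$ joining an end node of $P_i$ to $\lca_i$. In the iteration for $i$, if $\tilde{V}_i$ meets $Q$, let $v$ be the topmost node of $\tilde{V}_i$ on $Q$; the algorithm inserts a bottleneck edge $f \in E_i$ incident to $v$ and deletes from $F'$ every edge of $E_i$ that is a descendant of $f$, clearing $Q$ below $f$; one checks that the chosen $f$ is an ancestor of the witness $e_i$, so that $e_i$ is not subsequently re-inserted on $Q$. To see that nothing of $F'$ survives strictly above $v$ on $Q$ I would argue by contradiction: such an edge $g$ lies in $F$, hence is a bottleneck edge with regard to $i$ with non-relaxable end nodes, and those end nodes lie on $Q$ above $v$ and are incident to $g \in F'$; so if $\mu(\cdot, i)$ were positive at one of them that node would lie in $\tilde{V}_i$ above $v$, contradicting the choice of $v$; but then \eqref{treedual1} being tight for $(i, g)$ forces $\nu(g, i) = \xi(i)$, and since $\nu(g, i) = 0$ at the start of the iteration for $i$, tracking when and why the increase phase may raise $\nu(g, i)$ --- only into edges that at that moment are unsaturated or sit on a subpath shared with a later demand path being processed --- contradicts $g$'s being an edge of $F$ (hence $\nu$- and $\mu$-saturated with ultimately non-relaxable end nodes). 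It then remains to rule out that a later iteration (for some $i'' < i$) installs a second $F'$-edge on $Q$: an edge that such an iteration inserts into $E_i$ is incident to a node already lying in $\tilde{V}_i$ (it is incident to an earlier $F'$-edge and carries positive $\mu(\cdot, i)$), so the tie-breaking rule forces it to coincide with the edge already chosen for $Q$ in the iteration for $i$. Assertion (iii) comes out of the same picture: if a final $F'$-edge $g \in E_{i'}$ is incident to $v \in V_i$ with $\mu(v, i) > 0$, then $v \in \tilde{V}_i$ at the time $P_i$ is processed, so either $v$ is a topmost node of $\tilde{V}_i$ and an edge of $E_i$ incident to $v$ is inserted, or $v$ lies below the topmost node $\hat{v}$ of $\tilde{V}_i$ on its half, and using that the edge chosen at $\hat{v}$ is an ancestor of $e_i$ together with $g$'s being a bottleneck edge at $v$, one shows the single edge of $F' \cap E_i$ on that half is in fact incident to $v$.

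The step I expect to be the main obstacle is precisely the bookkeeping in assertion (ii): excluding an $F'$-edge strictly above the topmost node of $\tilde{V}_i$ on a half-path, and excluding re-pollution of that half-path by later iterations. Both require a delicate time-indexed analysis of how the increase phase redistributes $\nu$ and $\mu$ along subpaths shared by several demand paths, together with a precise understanding of which of the (at most two) candidate bottleneck edges at a node the rule ``choose $f$ with $l_f = v$ when possible'' selects, so that the edge picked in a later iteration is forced to agree with the one picked earlier.
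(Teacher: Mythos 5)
The main gap is in assertion (ii), precisely at the step you yourself flag as the obstacle. Your plan is to show that, after the iteration for $i$, no $F'$-edge survives strictly above the topmost node $v$ of $\tilde{V}_i$ on the half-path $Q$, and to derive a contradiction from $\nu(g,i)=\xi(i)$ when both end nodes of such an edge $g$ carry $\mu(\cdot,i)=0$. But there is no contradiction to be had there: by Lemma~\ref{lem.increasing}, every edge of $F$ is a bottleneck edge with regard to \emph{every} index, i.e.\ \eqref{treedual1} is tight and \eqref{treedual2}, \eqref{treedual3} are saturated, and nothing prevents the tightness of \eqref{treedual1} for $(i,g)$ from being realized entirely by $\nu(g,i)$ (moreover $\xi(i)$ may simply be $0$). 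So the ``delicate time-indexed analysis'' you defer is not a missing calculation but a wrong target: the bad configuration cannot be ruled out by local dual bookkeeping at all. The paper rules it out globally, by induction on the \emph{larger} indices of $I$: whenever a half-path $K$ of $P_i$ threatens to carry two $F'$-edges (two edges already present, or an edge added at a node of $\tilde{V}_i$ while another edge of $K$ survives above it), one shows that the two offending edges both lie on a single half-path of $P_j$, where $j>i$ is the index whose deletion-phase iteration produced the interfering edge; this transfer uses the sorting of demand pairs (since $P_i$ and $P_j$ share the relevant edge or node and $i<j$, $\lca_i$ is a descendant of $\lca_j$), and it contradicts (ii) for $j$. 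That reduction to a larger index is the key idea your argument is missing.

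There are secondary problems as well. The claim ``the chosen $f$ is an ancestor of the witness $e_i$'' is false in general ($f$ may lie on the other half-path of $P_i$, incomparable with $e_i$); what is actually needed, and what the paper proves, is that $e_i$ has no descendant in $F\cap E_i$ (every edge of $F\cap E_i$ other than $e_i$ is incident to $V_{i'}$ for some $i'\in I$ with $i'>i$, and $e_i$ cannot be an ancestor of such an edge), so that the witness-insertion rule never creates a second edge below an existing one. In (i), your justification that the replacing ancestor lies in $E_j$ --- ``its upper end node cannot rise strictly above $\lca_j$ because it is an ancestor of an edge of $E_j$'' --- is not valid as stated: ancestors of edges of $E_j$ can lie arbitrarily far above $\lca_j$; the correct reason is that the new edge lies in $E_{j'}$ and $\lca_{j'}$ is a descendant of $\lca_j$, which follows from $j'<j$ together with the fact that the deleted edge lies in $E_{j'}\cap E_j$. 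Finally, (iii) is only gestured at: the paper's proof takes a violating pair minimizing $|i-i'|$, splits into the cases $i'>i$ and $i'<i$, and in each case reduces to a violation of (ii) (using, in the second case, which edges the increase phase adds to $F$ simultaneously); your outline, which leans on the unproven structure from (ii), does not supply this reduction.
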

\begin{proof}
 To prove (i),
 we first consider the case of $j \in I$.
 The deletion phase adds
 an edge $e  \in E_j$ to $F'$ before completing the iteration for $j$.
 If $e$ is removed from $F'$ in a later iteration for $j'$,
 then $e \in E_{j'}$ and
 the algorithm adds an ancestor $f \in E_{j'}$ of $e$ to $F'$.
 $\lca_{j'}$ is a descendant of $\lca_j$
 because $j' < j$ and $e \in E_{j'}\cap E_{j}$.
 Hence, $f$ also covers $P_j$. 
 Even if $f$ is eventually removed from $F'$,
 we can similarly prove that $P_j$ is covered by 
 another edge that is added to $F'$ in the iteration.
 Hence, at completion of the algorithm, $P_j$ is covered by $F'$ 
 if $j \in I$.

 Before discussing the case of $j \not\in I$,
 we note that if $j \in I$,
 then $F'\cap E_j$ contains 
 the witness $e_j$ of $P_j$ or one of its ancestors
 at completion of the algorithm.
 This is guaranteed by the deletion phase during iteration for $j$, which adds $e_j$ to $F'$
 if $F'$ does not contain $e_j$ or any of its ancestors.

 We now discuss the case of $j \not\in I$.
 By definition of the increase phase,
  $F$ contains an edge $e$ that covers $P_j$, and $e$ is the
 witness of $P_{i}$ for some $i \in I$
 with $i < j$.
 By the above observation, $F'$ contains $e$ or
 one of its ancestors $f \in E_i$.
 $f$ also covers $P_j$, because $i < j$.
 Hence, $F'$ contains an edge covering $P_j$ even if $j \not\in I$.

 Next, we prove (ii) by induction on $i$.
 Let us consider the case of maximum $i$ in $I$.
 The first iteration of the deletion phase adds the witness $e_i$ of $P_i$ to $F'$.
 Suppose that another edge $f \in E_i$ 
 is added to $F'$ 
  in the iteration for $i' \in I$ in the deletion phase.
 If $f$ is added to $F$ before the iteration for $i$ in the
 increase phase, 
 then $i$ does not belong to $I$, which is a contradiction. 
 Otherwise,
 $f$ is added to $F$ because it is incident to an end node
 of $e$, and $e$ does not cover $E_{i'}$.
 Since $i' <i$, this implies that 
 either $f$ is a descendant of $e$ or $u_e=u_f$.
 In the former case,
 $e$ is not chosen as the witness of $P_i$, because
 the witness is a bottleneck edge of maximum depth with both
 end nodes non-relaxable, which contradicts the definition of $e$.
 In the latter case, $u_e=u_f=\lca_i$, consistent with (ii) in Lemma 6.

 We next consider the case of non-maximal $i$ in $I$.
 Let $K$ be the set of edges in the subpath between an end node of $P_i$ and $\lca_i$.
 Suppose that at the start of the iteration for $i$ in the deletion phase, there exist distinct edges $e, e'\in K\cap F'$.
 Without loss of generality, we assume that $e'$ is an ancestor of $e$.
 Assume that the deletion phase adds
 $e$ to $F'$ during the iteration for $j$ with $i < j$,
 and adds $e'$ to $F'$ during the iteration for $j'$ with $i <j'$.
 Since $\lca_{i}$ is a descendant of $\lca_{j}$,
 $e'$ covers $P_i$; therefore,
 $e'$ also covers $P_j$.
 Hence, the subpath of $P_j$ between $l_e$ and $\lca_j$
 is covered by two edges in $F'$, which is a contradiction by 
 induction.
 Therefore, at the beginning of the iteration
 for $i$ in the deletion phase, $K$ is covered by at most one edge in $F'$.
 
At the start of the iteration for $i$, suppose that $K$ is covered by an edge $e \in F'$.
 If another edge $f \in K$ is added to $F'$ during this iteration,
 then $f$ is incident to a node $v \in \tilde{V}_i$,
and another edge $f'$ incident to $v$ is in $F'$.
 Suppose that $f'$ is added to $F'$ during the iteration for $j$ in the˚
 deletion phase.
 If $v$ is an ancestor of $e$,
 then $e$ is removed from $F'$ when $f$ is added to $F'$.
 If $v$ is a descendant of $e$, then
 $P_j$ is covered by $e$ because $j > i$. 
 Therefore, the subpath of $P_j$ between $l_{f'}$ and
 $\lca_j$
 is covered by both $e$ and $f'$,
 which again is a contradiction by induction.
 
 $K$ could also be covered by two edges if the witness $e_i$ of $P_i$ is in $K$ and is added to $F'$ in
 the iteration for $i$, even though one of its descendants has already been in $F'$.
 We now demonstrate that this situation does not occur. Note that each edge in $F \cap E_i$ is 
 either $e_i$ or is incident to a node in $V_{i'}$ for
 some $i' \in I$ with $i' > i$.
 Since $P_{i'}$ is not covered by $e_i$ and
 $\lca_{i}$ is not an ancestor of $\lca_{i'}$,
 $e_i$ is not an ancestor of
 any edge in $F \cap E_i$.
 Hence, no descendant of $e_i$ is added to $F'$.
 
 Finally, we prove (iii). 
 Suppose that some $i,i' \in I$ violates the claim of (iii).
 We consider such a pair of $i$ and $i'$ that minimizes $|i-i'|$.
 Let $e \in F' \cap E_{i'}$ be an edge incident to a node $v \in
 V_{i}$. Suppose that $i' > i$.
 If $F' \cap E_{i}$ contains no edge incident to $v$,
 then the iteration for $i$ during the deletion phase adds 
 an ancestor edge of $v$ in $E_{i}$ to $F'$. 
 This edge also covers $P_{i'}$ because $\lca_{i'}$ is
 an ancestor of $\lca_{i}$, which contradicts (ii). 
 Next, we suppose that $i' < i$.
 Claim (iii) is obvious when $e$ covers $P_{i}$; thus, we
 suppose that $e$ does not cover $P_{i}$.
 $\mu(v,i)>0$ indicates that $e$ was not in $F$
 at the start of  the iteration for $i$ in the increase phase.
 Let $j \in I$ be the index for which
 $e$ is added to $F$ during the iteration for $j$ in the increase phase,
where $j \geq i$.
 By the definition of the increase phase,
 the witness of $P_j$ is incident to $v$,
 and because $\mu(v,i)>0$, an edge incident to $v$ in $E_i$ is
added to $F$ simultaneously with $e$.
 Since this edge is not in $F'$,
 the iteration for $i$ in the increase phase
 adds an ancestor edge of $v$ in $E_i$ to $F'$.
 This edge covers $P_j$, indicating that the subpath of $P_j$ between its
 one end node and $\lca_j$ is covered by two edges if $j > i$, which contradicts (ii). Therefore, $j=i$.
 However, $e$ is not added to $F'$ unless the witness of $P_i$ is added
 to $F'$ and a contradiction arises.
\end{proof}

\begin{proof}[Proof of Theorem~\ref{thm.multicut}]
 Let $F'$ denote the edge set output by the algorithm.
 By claim (i) of Lemma~\ref{lem.multicut-key}, $F'$ is a multicut.
 Since $\sum_{i \in [k]}\xi(i)$ is a lower bound on the optimal value,
 it suffices to show that $\sum_{e\in F'}w(e) + \sum_{v \in V(F')}w(v) \leq 2 \sum_{i \in [k]}\xi(i)$.

 Lemma~\ref{lem.increasing} implies that 
 $w(e)=\sum_{i \in [k]:e \in E_i}\nu(e,i)$ 
for each $e \in F$,
 and that
 $w(v)=\sum_{i \in [k]:v \in V_i}\mu(v,i)$
 for each $v \in V(F)$.
 Recall that $F' \subseteq F$.
 Hence, 
\begin{equation}\label{eq.cost}
\sum_{e \in F'}w(e)+\sum_{v \in V(F')}w(v) 
 = \sum_{i \in [k]}
 \left(
 \sum_{e \in F' \cap E_i}\nu(e,i)
 +
 \sum_{v \in V(F') \cap V_i}\mu(v,i)
 \right).
\end{equation}
 (iii) of Lemma~\ref{lem.multicut-key} indicates that
 $V(F')\cap V_i \subseteq V(F' \cap E_i)$ holds if $i \in I$.
 If $i \not\in I$, then $\nu(e,i)=0$ for any $e \in E_i$
 and $\mu(v,i)=0$ for any $v \in V_i$.
 Hence, the right-hand side of \eqref{eq.cost} is equal to
 \[
 \sum_{i \in I}
\sum_{e=uv \in F' \cap E_i}
 \left(
 \nu(e,i)+\mu(v,i)+\mu(u,i)
 \right)
 =
 \sum_{i \in I}
 |F' \cap E_i|\xi(i).
 \]
 $|F'\cap E_i| \leq 2$ for each $i \in I$ by (ii) of
 Lemma~\ref{lem.multicut-key}.
 Recall that $\xi(i)=0$ for each $i \not\in I$.
 Therefore, the right-hand-side of (9) is at most $2\sum_{i \in [k]}\xi(i)$.
 \end{proof}

\section{Conclusion}
\label{sec.conclusion}

In this paper, we emphasized a large integrality gap when the natural LP relaxation is applied to the graph
covering problem that minimizes node weights.
We then formulated an alternative LP relaxation for graph covering problems
in edge- and node-weighted graphs that is stronger than the natural
relaxation.
This relaxation was incorporated into an exact
algorithm for the prize-collecting EDS problem in trees, and a 2-approximation algorithm
for the multicut problem in trees. The approximation guarantees for these algorithms match the previously known best
results for purely edge-weighted graphs.
In many other graph covering problems, the integrality gap in the proposed relaxation 
would increase if node weights were introduced, because 
the problems in node-weighted graphs admit stronger hardness results, as shown 
in the Appendix.
Nonetheless, the proposed relaxation is a potentially useful tool for
designing heuristics or using IP solvers to solve the above problems.

\section*{Acknowledgements}

This work was partially supported by Japan Society for the Promotion of
Science (JSPS), Grants-in-Aid for Young Scientists (B) 25730008.
The author thanks an anonymous referee of SWAT 2014
for pointing out an error on the reduction from
the prize-collecting multicut problem to the multicut problem in an earlier version of this paper.

\bibliographystyle{mystyle}
\bibliography{sndp,eds} 

\appendix
\section{EDS problem in general graphs}\label{sec.reduction}
\subsection{Hardness results}
In this section, we discuss the EDS problem
in general graphs.
First, we show that the EDS problem in node-weighted graphs is as hard as 
the set cover problem, and the EDS problem in edge- and node-weighted graphs
is as hard as the (non-metric) facility location problem.
To this end, 
we reduce the set cover problem or the facility
location problem 
to the EDS problem in bipartite graphs.
Accordingly, our hardness results hold for the EDS problem in bipartite graphs.

We now define the set cover problem and facility location problem.
In the set cover problem, we are given a set $V$, family $\mathcal{S}$
of subsets of $V$, and cost function $c\colon \mathcal{S} \rightarrow
\Rset_+$.
A solution is a subfamily $\mathcal{X}$ of $\mathcal{S}$ such that
$\cup_{X \in \mathcal{X}}X=V$. The objective is to minimize $\sum_{X \in \mathcal{X}}c(X)$.
Inputs in the facility location problem
are a client set $V$, a facility set $F$, opening costs
$o\colon F \rightarrow \Rset_+$, and connection costs $d\colon V \times
F \rightarrow \Rset_+$.
A solution to this problem
is a pair of  $F' \subseteq F$ and $\rho\colon V \rightarrow F'$,
and the objective is to minimize
$\sum_{f \in F'}o(f) + \sum_{v \in V}d(v, \rho(v))$.

Given an instance of the set cover problem,
define a facility set $F$ whose members each corresponds to 
a set in $\mathcal{S}$;
We let $X_f$ denote the member of $\mathcal{S}$ corresponding to $f \in F$.
Define opening costs $o$ and connection costs $d$ such that 
$o(f)=c(X_f)$, $d(v,f)=0$ if $v \in X_f$, and
$d(v,f)=+\infty$ otherwise.
Then, the instance $(V,F,o,d)$ of the facility location problem is
equivalent to the instance $(V,\mathcal{S},c)$ of the set cover problem, implying
that the non-metric facility location problem generalizes the
set cover problem.

\begin{theorem}\label{thm:hardness}
If the EDS problem in node-weighted bipartite graphs
admits a $\gamma$-approximation algorithm,
 then the set cover problem also admits a $\gamma$-approximation algorithm.
 If the EDS problem in edge- and node-weighted bipartite graphs
 admits a $\gamma'$-approximation algorithm,
 then the facility location problem also admits a
 $\gamma'$-approximation algorithm.
\end{theorem}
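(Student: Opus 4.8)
The plan is to construct, from any instance of the set cover problem (respectively, the facility location problem), an instance of the EDS problem in a bipartite graph, such that feasible solutions correspond bijectively and with equal cost. Since the facility location problem already generalizes the set cover problem (as observed just before the statement), it suffices to give the reduction from facility location to the EDS problem in edge- and node-weighted bipartite graphs; the set cover reduction then follows by restricting attention to the special weight pattern $o(f)=c(X_f)$, $d(v,f)\in\{0,+\infty\}$, for which all edge weights are $0$ and only node weights survive, yielding the first claim.

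First I would set up the bipartite graph. Given an instance $(V,F,o,d)$ of facility location, take one side to be $V$ (the clients) and the other side to be $F$ (the facilities), and for every client $v\in V$ and facility $f\in F$ put an edge $vf$ with edge weight $w(vf)=d(v,f)$. I would assign node weight $w(f)=o(f)$ to each facility node $f\in F$ and node weight $w(v)=0$ to each client node $v\in V$. (If an edge has infinite weight it will simply never be chosen, which matches $d(v,f)=+\infty$ forbidding the assignment $\rho(v)=f$.) The key point is that this graph is bipartite, so every edge $vf$ is incident to exactly one client and one facility, and in a bipartite graph between $V$ and $F$ the only edges are client--facility edges; hence for an edge set $D$ to be an EDS it must dominate every edge $vf$, which — since no two client--facility edges share two endpoints in a forced way — means $D$ must be incident to $v$ or to $f$ for every pair $vf$.

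Next I would translate solutions. Given a feasible facility-location solution $(F',\rho)$, let $D=\{\,v\rho(v):v\in V\,\}$. Every client $v$ is an endpoint of the edge $v\rho(v)\in D$, so $D$ dominates all edges incident to clients, i.e. all edges of the graph; thus $D$ is an EDS. Its cost is $\sum_{v\in V}w(v\rho(v))+\sum_{u\in V(D)}w(u)=\sum_{v\in V}d(v,\rho(v))+\sum_{f\in V(D)\cap F}o(f)$, and $V(D)\cap F=\rho(V)\subseteq F'$; after discarding from $F'$ any facility opened but unused we may assume equality, so the EDS cost equals the facility-location cost. Conversely, given an EDS $D$, for each client $v$ pick any edge of $D$ incident to $v$ if one exists, and otherwise pick any edge $vf\in D$ dominating an edge at $v$ — here the subtle step is that an EDS need not itself be incident to every client, only to dominate every edge, so I must argue that for each client $v$ there is still an edge of $D$ touching $v$. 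This holds because $v$ has at least one incident edge $vf$ (assuming no isolated clients, which we may enforce by discarding clients with no finite-cost facility, or by noting such an instance is infeasible), and dominating $vf$ forces an edge of $D$ incident to $v$ or to $f$; iterating/choosing carefully one obtains for every $v$ an incident edge $v\rho(v)\in D$, defining $\rho$. Then $F'=V(D)\cap F$ and this $\rho$ form a feasible facility-location solution whose cost is at most $w(D)+w(V(D))$, since $\sum_v d(v,\rho(v))=\sum_v w(v\rho(v))\le w(D)$ and $\sum_{f\in F'}o(f)=\sum_{f\in V(D)\cap F}w(f)\le w(V(D))$.

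Combining the two directions, the optimal EDS cost equals the optimal facility-location cost, and a $\gamma'$-approximate EDS solution maps (in polynomial time) to a facility-location solution of cost at most $\gamma'$ times the facility-location optimum; specializing to $\{0,+\infty\}$ connection costs gives the $\gamma$-approximation transfer for set cover. The main obstacle I anticipate is precisely the conversion from an arbitrary EDS back to a valid assignment $\rho$: an EDS is allowed to dominate an edge $vf$ via an edge incident to $f$ rather than to $v$, so one must rule out (or repair) the case where some client is left untouched by $D$, which is handled by the incidence/forcing argument above together with the mild preprocessing that removes clients having no finite-cost incident edge.
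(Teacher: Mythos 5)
Your construction omits the gadget that makes the reduction work, and the step you yourself flag as ``the subtle step'' is in fact false. In your graph (complete bipartite between clients $V$ and facilities $F$, with no extra vertices), an edge set can dominate every edge $vf$ entirely from the facility side, so an EDS need not touch every client, and there is no ``forcing argument'' that produces an edge of $D$ incident to each $v$. Concretely, take one facility $f_1$ with opening cost $o(f_1)=0$ and $n$ clients with $d(v_i,f_1)=1$ for all $i$: the facility-location optimum is $n$, but in your graph the single edge $v_1f_1$ is already an EDS of total weight $1$ (it dominates every edge $v_if_1$ through the shared endpoint $f_1$). Hence the optimal EDS value can be far below the facility-location optimum, the claimed cost equivalence fails, and a $\gamma'$-approximate EDS gives no guarantee for facility location; your proposed repairs (discarding clients with no finite-cost facility, ``iterating/choosing carefully'') do not address this, since the problematic clients do have finite-cost edges and simply are not incident to any edge of $D$.

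The paper closes exactly this hole with an extra pendant gadget: for each client $v$ it adds a new node $v'$ with $w(v')=+\infty$ and a zero-weight edge $e_v=vv'$. The edge $e_v$ can only be dominated by $e_v$ itself (which would incur the infinite weight of $v'$) or by an edge incident to $v$, so every finite-weight EDS must contain an edge incident to each client; this is what legitimizes defining $\rho(v)$ from $S$ and makes the connection costs actually be paid, after which the equal-cost correspondence you were aiming for goes through (and the set-cover statement follows as the zero-connection-cost special case, as you intended). Incidentally, for the pure set-cover specialization your gadget-free construction could be salvaged by extracting the cover $\{X_f\colon f\in V(D)\cap F\}$ rather than an assignment from client-incident edges, since with zero edge weights the EDS cost only charges touched facility nodes and any domination of a finite edge at $v$ yields a touched set containing $v$; but for general connection costs the pendant gadget (or something equivalent) is indispensable.
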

\begin{proof}
To prove Theorem 4, we reduce the facility location problem to the EDS problem.
From a client set $V$ and facility set $F$,
we construct the complete bipartite graph with
bipartition $V$ and $F$.
Moreover, for each $v \in V$,
we add a new node $v'$ and an edge $e_v$ that joins $v$ and $v'$.
Note that this operation retains the bipartite property of the graph.
We define edge weights $w$ such that $w(e)=d(v,f)$ for each $e=vf$, where $v \in V$ and $f \in F$ and $w(e_v)=0$ for each $v \in V$. 
The node weights $w$ are defined by
$w(v)=0$ and $w(v')=+\infty$ for each $v \in V$,
and $w(f)=c(f)$ for each $f \in F$.

Let $S$ be an EDS for this graph.
We can assume $e_v \not\in S$ because $w(v')=+\infty$.
Since $S$ dominates each $e_v$,
$S$ contains an edge incident to each $v \in V$.
Moreover, if $S$ contains more than one edge incident to $v \in V$,
$S$ remains an EDS if one of these edges is arbitrarily discarded.
Hence $S$ contains exactly one edge incident to $v \in V$,
which joins $v$ and a node $f \in F$.
Let $\rho(v)$ be the opposite end node of the edge in $S$ incident to
$v$, and let $F'=\{\rho(v)\colon v \in V\}$.
Then, $(F',\rho)$ is a solution to the facility location
problem, with cost equaling the weight of $S$.
Conversely, given a solution to the facility location problem,
 define $S$ as $\{v\rho(v) \colon v \in V\}$. Then $S$ is an EDS of the
 graph, and the weight of $S$ equals the cost of the solution to
 the facility location problem. Hence, the reduction is that required 
 in the latter part of the theorem.

 As observed above, the set cover problem corresponds to 
 the instances of the facility location problem with zero connection cost.
 For these instances, the reduction defines instances of the EDS
 problem in which all edges are weighted zero.
 The former part of the theorem follows from this statement.
\end{proof}

Without describing the details, 
we can also show that 
reductions in the proof of Theorem~\ref{thm:hardness} are applicable (with modification) to fundamental covering problems 
such as the Steiner tree problem, $T$-join problem, and edge
cover problem.

\subsection{$O(\log |V|)$-approximation algorithm}

Since it is NP-hard to achieve the $o(\log |V|)$-approximation in the set
cover problem, by Theorem~\ref{thm:hardness}, the same situation exists
for the EDS problem in node-weighted bipartite graphs.
Here, we propose an $O(\log |V|)$-approximation algorithm for the
prize-collecting EDS
problem in general edge- and node-weighted graphs.

Our algorithm reduces the prize-collecting EDS problem to the edge cover problem.
Given an undirected graph $G=(V,E)$ and a set $U$ of nodes, 
we define an edge cover as a set $F$ of edges such that $\delta_F(v)\neq
\emptyset$ for each $v \in U$.
Given $w\colon E \cup V \rightarrow \Rset_+$, the problem seeks an
edge cover $F$ that minimizes $w(F) + w(V(F))$.
For an instance $I'=(G,U,w)$ of the edge cover problem, we apply the following LP relaxation:
\begin{align*}
 & \EC(I')= && \\
 &\mbox{minimize} && 
 \sum_{e \in E} w(e)x(e) + \sum_{v \in V}w(v)x(v)\\
 &\mbox{subject to} &&
 \sum_{e \in \delta(v)} x(e) \geq 1 && \mbox{for } v \in U,\\
 &&& x(v) \geq x(e) && \mbox{for } v \in V, e \in \delta(v),\\
 &&& x(e) \geq 0 && \mbox{for } e \in E,\\
 &&& x(v) \geq 0 && \mbox{for } v \in V.
\end{align*}

\begin{lemma}\label{lem.edstoec}
Suppose that an algorithm computes an edge cover
of weight at most $\gamma \EC(I')$ for any instance $I'$ of the edge
 cover problem.
Then the prize-collecting EDS problem in edge- and node-weighted graphs admits
 a $4\gamma$-approximation algorithm.
\end{lemma}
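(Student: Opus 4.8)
The plan is to reduce an instance $I=(G,w,\pi)$ of the prize-collecting EDS problem to an instance of the edge cover problem, run the assumed $\gamma$-approximation on it, and translate the resulting edge cover back into an EDS whose cost is bounded using $\EC(I')\le 4\,\LP(I)$. The first step is to handle the penalties: for each edge $f\in E$ with $\pi(f)$ finite, decide up front whether it is cheaper to ``buy'' the constraint of $\delta(f)$ or to pay its penalty. Guided by an optimal fractional solution $(x,y,z)$ to $\LP(I)$, I would let $U$ be the set of edges $f$ for which $z(f)$ is small, say $z(f)\le 1/2$; these are the constraints we commit to covering, and the remaining penalties are paid outright. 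Since $z(f)>1/2$ for the unpaid ones, $\sum_{f\notin U}\pi(f) \le 2\sum_{f}\pi(f)z(f)$, so the penalty part of our solution is within a factor $2$ of the penalty part of $\LP(I)$.

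Next I would build the edge cover instance. The key observation is that an edge set $F$ dominates an edge $f=ab$ exactly when $F$ contains an edge incident to $a$ or to $b$, i.e.\ when $\delta_F(a)\cup\delta_F(b)\ne\emptyset$. This is \emph{almost} an edge-cover constraint on nodes, except it is a disjunction over the two endpoints of $f$. To turn it into genuine node-coverage constraints I would split: for each $f=ab\in U$, the constraint ``cover $a$ or cover $b$'' must be enforced. The natural device is to put $a$ (or $b$) into the node set $U'$ to be covered — but which one? I would again read this off the fractional solution: for $f=ab\in U$, at least one endpoint, say $a$, has $\sum_{e\in\delta(a)} x(e)$ not too small once we rescale, because $\sum_{e\in\delta_f(a)\cup\delta_f(b)} x(e)\ge \sum_{e\in f} y(f,e)\ge 1-z(f)\ge 1/2$; pick that endpoint and add it to $U'$. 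Set $G'=G$ with the same weights $w$ restricted to $E\cup V$, and let $I'=(G,U',w)$. Then $2x$ restricted to $E\cup V$ is feasible for $\EC(I')$ (the covering constraints hold by the choice of endpoint and the factor $2$; the constraints $x(v)\ge x(e)$ for $e\in\delta(v)$ were already present in the IP and are respected by scaling), so $\EC(I')\le 2\sum_{e} w(e)x(e)+2\sum_v w(v)x(v)$.

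Now apply the assumed algorithm to $I'$ to get an edge cover $F'$ with $w(F')+w(V(F'))\le \gamma\,\EC(I')$. By construction $F'$ covers every chosen endpoint, hence dominates every $f\in U$; together with paying the penalties of $E\setminus U$ this gives a feasible prize-collecting EDS solution of cost
\[
 w(F')+w(V(F')) + \sum_{f\notin U}\pi(f) \le \gamma\cdot 2\Bigl(\sum_e w(e)x(e)+\sum_v w(v)x(v)\Bigr) + 2\sum_f \pi(f)z(f) \le 2\gamma\,\LP(I) \le 2\gamma\cdot\mathrm{OPT}(I).
\]
Combining the edge/node part and the penalty part under the single factor $\max\{2\gamma,2\}\le 2\gamma$ would already give $2\gamma$; the stated bound $4\gamma$ leaves comfortable slack, which I would use to absorb the rescaling of $x$ into the $x(v)\ge x(e)$ constraints more crudely (e.g.\ if one prefers to threshold $z(f)\le 1/4$ and $y$-masses at $1/4$, the constants degrade to $4$ but the argument is cleaner). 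The main obstacle, and the part needing care, is the disjunctive nature of the domination constraint: a single chosen edge incident to either endpoint suffices, so the reduction to a pure node-cover instance must commit to one endpoint per demand edge, and the soundness of that commitment rests entirely on the fractional solution certifying that the committed endpoint carries enough $x$-mass. Getting the bookkeeping of the scaling factors to land at or below $4\gamma$ while keeping feasibility of the scaled fractional solution for $\EC(I')$ is the only delicate point; everything else is routine.
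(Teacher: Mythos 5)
Your overall strategy is exactly the paper's: solve the strengthened relaxation $\LP(I)$, pay the penalties of edges with $z>1/2$ (losing a factor $2$ on the penalty term), reduce the remaining domination constraints to an edge cover instance on a set of committed endpoints, and bound $\EC(I')$ by a scaled copy of $x$. However, the quantitative core of your argument has a genuine gap. For $f=ab$ with $z(f)\le 1/2$ the LP gives only $\sum_{e\in\delta(f)}x(e)\ge\sum_{e\in\delta(f)}y(\delta(f),e)\ge 1-z(f)\ge 1/2$, and this mass is split between the two stars $\delta(a)$ and $\delta(b)$; committing to the better endpoint therefore only guarantees $\sum_{e\in\delta(a)}x(e)\ge 1/4$, not $1/2$. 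Consequently $2x$ is in general \emph{not} feasible for $\EC(I')$, and your claimed $2\gamma$ bound does not follow; you need to scale by $4$ (equivalently, take $U$ to be the nodes with incident fractional mass at least $1/4$, as the paper does), which is precisely where the stated constant $4\gamma$ comes from. Your parenthetical remark about degrading the constants to $4$ gestures at this, but the main argument as written asserts feasibility of the factor-$2$ scaling, which is false.

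A second, smaller gap: you justify the constraints $x(v)\ge x(e)$ of $\EC(I')$ by saying they ``were already present in the IP,'' but the solution you scale is an optimal solution of $\LP(I)$, which does \emph{not} contain these constraints; it only has $x(v)\ge\sum_{e\in\delta_C(v)}y(C,e)$ and $x(e)\ge y(C,e)$. The paper closes this by taking $x$ minimal, so that for each $e$ there is some $e'\in\delta(e)$ with $x(e)=y(\delta(e'),e)$, whence $x(v)\ge y(\delta(e'),e)=x(e)$ for each endpoint $v$ of $e$; alternatively one could add the (integrally valid) constraints $x(v)\ge x(e)$ to the relaxation before solving. With the scaling fixed to $4$ and this minimality step supplied, your proof coincides with the paper's and yields the claimed $4\gamma$-approximation.
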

\begin{proof}
We first solve $\LP(I)$ for the given instance $I=(G,w,\pi)$ of the
prize-collecting EDS problem.
Let $(x,y,z)$ be the obtained optimal solution to $\LP(I)$.
Recall that $\Efam=\{\delta(e)\colon e \in E\}$ in the EDS problem; thus we denote $y(C,e)$ and $z(C)$ by $y(e',e)$ and $z(e')$ 
respectively, where $e'$ is the edge corresponding to $C \in \Efam$.
Let $U=\{v \in V \colon \sum_{e \in \delta(v)}x(e)\geq 1/4\}$.
An instance $I'$ of the edge cover problem is assumed to
consist of $G$, $U$, and $w$.

We now prove that $\EC(I') \leq 4(\sum_{e \in E}w(e)x(e)+\sum_{v \in V}w(v)x(v))$.
In the following, $4x$ is denoted by $x'$. 
Since $\sum_{e \in \delta(v)}x(e)\geq 1/4$ for each $v \in U$,
$\sum_{e \in \delta(v)}x'(e)\geq 1$ for each $v \in U$, implying that $x'$
satisfies the first constraints of $\EC(I')$.
By the minimality of $x$,
there exists $e' \in \delta(e)$ such that $x(e)=y(e',e)$ for any $e \in E$.
The second constraint for $v$ and $e'$ in $\LP(I)$
implies that $x'(v)=4x(v)\geq 4y(e',e)=4x(e)=x'(e)$ for $e \in E$ and an end node
 $v$ of $e$.
Hence, $x'$ satisfies the second constraints of $\EC(I')$.
Therefore, $x'$ is a feasible solution to $\EC(I')$, and 
$\EC(I')$ is at most $4(\sum_{e \in E}w(e)x(e)+\sum_{v \in V}w(v)x(v))$.

Let $F$ be an edge cover of weight at most $\gamma \EC(I')$ for $I'$.
Let $D=\{e \in E \colon z(e)\leq 1/2\}$.
By the first constraint of $\LP(I)$,
$\sum_{e' \in \delta(e)}y(e,e')\geq 1/2$ for each $e \in D$.
Let $u$ and $v$ be the end nodes of $e \in D$.
Since 
$\sum_{e' \in \delta(e)}y(e,e')
 \leq \sum_{e' \in \delta(u)}y(e,e')+\sum_{e' \in \delta(v)}y(e,e')
 \leq \sum_{e' \in \delta(u)}x(e')+\sum_{e' \in \delta(v)}x(e')$, 
at least one of $u$ and $v$ is included in $U$.
Note that each node in $U$ has some incident edge in $F$.
Hence, each edge in $D$ is dominated by $F$.
Therefore, $F$ must pay at most 
$\sum_{e \not\in D}\pi(e) \leq 2\sum_{e \in E}\pi(e)z(e)$
as the penalty.
Summing up, the objective value of $F$ for $I$ does not exceed
\[
\gamma \EC(I')  + 2\sum_{e \in E}\pi(e)z(e)
 \leq 
 4\gamma \left(\sum_{e \in E}w(e)x(e)+\sum_{v \in V}w(v)x(v)\right)
 + 2\sum_{e \in E}\pi(e)z(e)
\leq 4\gamma \LP(I).
\]
Since $\LP(I)$ is a lower bound on the optimal value of $I$,
$F$ achieves a $4\gamma$-approximation for $I$.
\end{proof}

To obtain an $O(\log |V|)$-approximation algorithm for the
prize-collecting EDS problem,
it suffices to obtain an algorithm for the
edge cover problem required in Lemma~\ref{lem.edstoec} with
$\gamma=O(\log |V|)$.
As a sequel, we observe that such an algorithm indeed exists.

Since each node in $U$ is included in $V(H)$ for any edge cover $H$,
an algorithm that solves instances of $w(v)=0$ for all
$v \in U$ is sufficient.
Moreover, 
if $G$ contains an edge $e$ joining nodes $u$ and $v$ in $U$,
the instance is transformed into an equivalent instance by
inserting a new node $s$ that subdivides the edge $uv$ and 
setting the new node and edge weights as $w(us)=w(sv)=0$ and $w(s)=w(e)$.
In the obtained instance, 
$U$ forms an independent set.
Such instances of the edge cover problem are included in the (non-metric)
facility location problem. In fact, we may regard $U$ and $V\setminus U$ as the client 
and facility sets, respectively.
The weights of edges between clients and facilities represent 
connection costs,
and the weights of clients indicate opening costs.
Each edge in edge covers joins a client and facility and 
naturally allocates the client to the facility.

Consider an instance of the facility location problem.
We define a \emph{star} as a set comprising a facility $f \in F$, 
and clients $v_1,\ldots,v_k \in V$.
The cost of the star is $o(f)+\sum_{i=1}^k d(v_i,f)$.
Let $\mathcal{S}$ be the set of all stars.
Identifying the star as the subset $\{v_1,\ldots,v_k\}$ of $V$,
the facility location problem can be regarded as the set cover
problem with the set $V$ and family $\mathcal{S}$ of subsets of $V$.
Hence, the greedy algorithm for the set cover problem achieves $O(\log
|V|)$-approximation for the facility location problem.
An analysis of this greedy algorithm \cite{Lovasz1975,Chvatal1979}
has shown that the costs of solutions are bounded
by an LP relaxation of the set cover problem
(see also \cite{Vazirani2001}).
Our LP relaxation $\EC(I')$ applied to an instance $I'$ of the edge cover
problem is equivalent to this LP relaxation for the
set cover problem derived from $I'$.
The runtime of the greedy algorithm is a primary concern, because
the size of $\mathcal{S}$ in the set cover problem
is not bounded by a polynomial of the input size of $I'$.
The greedy algorithm can operate
in polynomial time for the given instances in the facility location
problem and edge cover
problem~\cite{Hochbaum1982}. Therefore, we can state the following theorem.

\begin{theorem}
 The prize-collecting EDS problem in general edge- and node-weighted graphs admits an $O(\log
 |V|)$-approximation algorithm.
\end{theorem}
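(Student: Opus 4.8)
The plan is to reduce everything to the edge cover problem and then invoke Lemma~\ref{lem.edstoec}: since that lemma turns any $\gamma$-approximation for the edge- and node-weighted edge cover problem (measured against the LP value $\EC(I')$) into a $4\gamma$-approximation for the prize-collecting EDS problem, it suffices to produce an edge cover algorithm with $\gamma = O(\log |V|)$. So I would spend all the effort on the edge cover problem.

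First I would normalize an edge cover instance $I' = (G,U,w)$. Every edge cover $H$ has $U \subseteq V(H)$, so I may set $w(v)=0$ for all $v \in U$ without changing the optimum or $\EC(I')$. Next, whenever $G$ has an edge $uv$ with $u,v \in U$, I subdivide it by a fresh node $s$, giving $s$ the old edge weight and the two new edges weight $0$; this keeps the instance equivalent and makes $U$ an independent set. Now I can read the instance as a non-metric facility location instance with clients $U$ and facilities $V \setminus U$, connection costs given by edge weights and opening costs by facility-node weights; by the reduction recalled just before Theorem~\ref{thm:hardness} this is in turn a set cover instance whose ground set is $U$ and whose sets are the \emph{stars} --- a facility $f$ together with a set of clients it covers, at cost $o(f)+\sum_i d(v_i,f)$.

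I would then run the classical greedy set cover algorithm on this star system. Its ratio is $O(\log |U|) = O(\log |V|)$, and --- the crucial point --- the Lov\'asz/Chv\'atal analysis bounds the greedy cost not merely by the integral optimum but by the optimum of the standard set cover LP relaxation. The key check is that this LP, instantiated on the stars coming from $I'$ (after normalization), coincides exactly with $\EC(I')$; granting that, the greedy solution has weight $O(\log |V|)\cdot \EC(I')$, precisely what Lemma~\ref{lem.edstoec} consumes.

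The hard part will be the running time, since there are exponentially many stars and the LP has exponentially many variables. I would handle this by appealing to the known fact (Hochbaum) that each greedy step --- choosing the most cost-effective star among all facilities --- can be carried out in polynomial time: for a fixed facility, the best set of still-uncovered clients to add is found simply by sorting them by connection cost. No extra work is needed on the LP side, because the Lov\'asz/Chv\'atal charging argument is purely combinatorial and never solves the LP; it only compares greedy's cost to an arbitrary fractional feasible point. Combining the normalization, the greedy algorithm with its LP guarantee, the efficient implementation, and Lemma~\ref{lem.edstoec} yields the claimed $O(\log |V|)$-approximation.
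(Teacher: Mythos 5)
Your proposal is correct and follows essentially the same route as the paper: normalize the edge cover instance (zero out weights on $U$, subdivide $U$--$U$ edges), view it as non-metric facility location and hence set cover over stars, run greedy with the Lov\'asz/Chv\'atal LP-based guarantee (noting the set cover LP coincides with $\EC(I')$), implement each greedy step in polynomial time via Hochbaum, and plug the resulting $\gamma=O(\log|V|)$ into Lemma~\ref{lem.edstoec}. No gaps worth noting.
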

\end{document}